\documentclass[a4paper,runningheads]{llncs}

\usepackage{etoolbox}
\usepackage[paperheight=235mm, paperwidth=155mm,textwidth=12.2cm,textheight=19.3cm,hmarginratio=1:1]{geometry}

\usepackage[T1]{fontenc}
\usepackage{amsmath,amssymb}
\usepackage{xspace}
\usepackage{booktabs,multirow}
\usepackage{siunitx} %
\usepackage{cite}
\usepackage{wrapstuff}
\usepackage{graphicx}
\usepackage{xcolor}
\usepackage[%
  bookmarks,
  unicode,
  colorlinks=true,
  allcolors=blue!65!black!90,
  breaklinks=true]{hyperref}
\usepackage{bbding}
\usepackage{mathrsfs}
\usepackage{stmaryrd}
\usepackage{adjustbox} %
\usepackage{mathtools}
\usepackage{pgfplots}
\pgfplotsset{compat=1.18}

\usepackage[inline,shortlabels]{enumitem}
\setitemize{noitemsep,topsep=0pt,parsep=0pt,partopsep=0pt,leftmargin=12.0pt}
\setenumerate{noitemsep,topsep=0pt,parsep=0pt,partopsep=0pt}
\setdescription{noitemsep,topsep=2pt,parsep=0pt,partopsep=0pt,leftmargin=12.5pt}

\usepackage{tikz}
\usetikzlibrary{calc}

\usetikzlibrary{calc,automata,positioning,decorations.pathreplacing}
\tikzset{align at top/.style={baseline=(current bounding box.north)}}
\tikzstyle{every node}=[font=\scriptsize]
\tikzstyle{state} = [draw,fill=white,circle,minimum size=4mm,inner sep=0pt,thick]
\tikzstyle{loc} = [draw,fill=white,rectangle,rounded corners,thick,align=center]
\tikzstyle{lstate} = [draw,fill=white,rectangle,rounded corners,thick,align=center,inner sep=2pt,minimum size=4.5mm]
\tikzstyle{dot} = [fill,circle,inner sep=0mm,minimum size=1.25mm,line width=0mm]
\tikzstyle{odot} = [draw,circle,inner sep=0mm,minimum size=1.25mm,line width=0mm]

\usepackage{listings}
\definecolor{codegray}{rgb}{0.5,0.5,0.5}
\definecolor{codered}{rgb}{0.67,0.0,0.0}
\definecolor{codeblue}{rgb}{0.0,0.0,0.67}
\definecolor{codegreen}{rgb}{0.0,0.45,0.0}
\makeatletter
\lst@Key{countblanklines}{true}[t]%
    {\lstKV@SetIf{#1}\lst@ifcountblanklines}

\lst@AddToHook{OnEmptyLine}{%
    \lst@ifnumberblanklines\else%
       \lst@ifcountblanklines\else%
         \advance\c@lstnumber-\@ne\relax%
       \fi%
    \fi%
		\vspace{\dimexpr-\baselineskip+\smallskipamount}}
\makeatother
\lstdefinelanguage{Modest}{
  keywords=[2]{option,action,timer,do,alt,when,stop},
  keywordstyle=[2]\color{codeblue},
  keywords=[3]{Uni,tau},
  keywordstyle=[3]\color{codered},
  keywordstyle=[1]\color{codegreen}
}

\makeatletter
\def\orcidID#1{\textsuperscript{\,\smash{\protect\raisebox{-1.25pt}{\href{http://orcid.org/#1}{\protect\includegraphics[scale=.8]{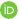}}}}}}
\makeatother

\makeatletter%
\g@addto@macro\normalsize{%
  \setlength\abovedisplayskip{3pt}%
  \setlength\belowdisplayskip{3pt}%
  \setlength\abovedisplayshortskip{-3pt}%
  \setlength\belowdisplayshortskip{3pt}%
}%
\makeatother
\allowdisplaybreaks %

\DeclareFontFamily{U}{mathx}{\hyphenchar\font45}
\DeclareFontShape{U}{mathx}{m}{n}{
      <5> <6> <7> <8> <9> <10>
      <10.95> <12> <14.4> <17.28> <20.74> <24.88>
      mathx10
      }{}
\DeclareSymbolFont{mathx}{U}{mathx}{m}{n}
\DeclareMathSymbol{\bigtimes}{1}{mathx}{"91}

\usepackage[nameinlink,capitalize,english]{cleveref}
\Crefname{figure}{Fig.}{Figs.}
\crefname{figure}{fig.}{figs.}
\Crefname{tabular}{Tab.}{Tabs.}
\crefname{tabular}{tab.}{tabs.}
\Crefname{section}{Sec.}{Sects.}
\crefname{section}{sec.}{sects.}
\Crefname{appendix}{App.}{Apps.}
\crefname{appendix}{app.}{apps.}
\Crefname{equation}{Eq.}{Eqs.}
\crefname{equation}{eq.}{eqs.}
\creflabelformat{equation}{#2#1#3}
\Crefname{example}{Ex.}{Exs.}
\crefname{example}{ex.}{exs.}
\Crefname{definition}{Def.}{Defs.}
\crefname{definition}{def.}{defs.}

\makeatletter
\AddToHook{cmd/appendix/before}{\def\cref@section@alias{appendix}}
\makeatother

\renewcommand{\emptyset}{\varnothing}
\newcommand{\RR}{\ensuremath{\mathbb{R}}\xspace}  %
\newcommand{\RRnn}{\ensuremath{\RR_{\geqslant0}}\xspace}   %
\newcommand{\RRnninf}{\ensuremath{\RR_{\geqslant0}^{\infty}}\xspace}   %
\newcommand{\RRpos}{\ensuremath{\RR_{>0}}\xspace}  %

\newcommand{\eg}{e.g.\ }
\newcommand{\ie}{i.e.\ }
\newcommand{\etal}{et al.\xspace}

\newcommand{\set}[1]{\ensuremath{\{\,#1\,\}}}
\newcommand{\tuple}[1]{\ensuremath{\langle #1 \rangle}}
\newcommand{\powerset}[1]{\ensuremath{\mathcal{P}({#1})}\xspace}
\newcommand{\defeq}{\mathrel{\vbox{\offinterlineskip\ialign{\hfil##\hfil\cr{\tiny \rm def}\cr\noalign{\kern0.30ex}$=$\cr}}}}

\newcommand{\DDists}[1]{\ensuremath{\mathit{DPD}({#1})}\xspace}
\newcommand{\Dists}[1]{\ensuremath{\mathit{PD}({#1})}\xspace}
\newcommand{\CDists}[1]{\ensuremath{\mathit{CPD}({#1})}\xspace}
\newcommand{\support}[1]{\ensuremath{\mathit{spt}({#1})}\xspace}
\newcommand{\xtr}[1]{\xrightarrow{\protect{\raisebox{-1pt}[0pt][0pt]{\ensuremath{\scriptstyle{#1}}}}}}
\newcommand{\xtrl}[1]{\xrightarrow{#1}}
\newcommand{\Pmin}[1]{\ensuremath{\mathrm{P}_\mathrm{\!min}(#1)}}
\newcommand{\Pmax}[1]{\ensuremath{\mathrm{P}_\mathrm{\!max}(#1)}}
\newcommand{\pmin}{\ensuremath{p_\mathit{min}}}
\newcommand{\pmax}{\ensuremath{p_\mathit{max}}}

\newcommand{\tool}[1]{\textsc{#1}}
\newcommand{\lang}[1]{\textsc{#1}}
\newcommand{\modest}{\lang{Modest}\xspace}
\newcommand{\jani}{\lang{Jani}\xspace}
\newcommand{\toolset}{\tool{Modest Toolset}\xspace}

\newcommand{\Timers}{\ensuremath{\mathcal{V}}\xspace}
\newcommand{\Acts}{\ensuremath{\mathcal{A}}\xspace}
\newcommand{\Locs}{\ensuremath{\mathcal{L}}\xspace}
\newcommand{\IniLoc}{\ensuremath{\ell_I}\xspace}
\newcommand{\Edges}{\ensuremath{\mathcal{E}}\xspace}

\usepackage{pifont}
\newcommand{\cmark}{\ding{51}}
\newcommand{\xmark}{\ding{55}}

\newcommand{\States}{\ensuremath{\mathcal{S}}\xspace}
\newcommand{\InitialState}{\ensuremath{s_I}\xspace}
\newcommand{\Trans}{\ensuremath{\mathcal{T}}\xspace}

\newcommand{\sem}[1]{\ensuremath{\llbracket{#1}\rrbracket}}
\newcommand{\sched}{\ensuremath{\mathfrak{s}}\xspace} %

\newcommand{\OCInt}{\ensuremath{\mathfrak{I}}}

\newcommand{\powersetP}{\ensuremath{\mathscr{P}}}
\newcommand{\SOCInt}{\ensuremath{\powersetP_{\mathrm{nol}}(\OCInt)}}
\newcommand{\Part}{\ensuremath{\mathfrak{p}}}
\newcommand{\da}{\ensuremath{\mathsf{ia}}}
\newcommand{\soj}{\sigma}
\newcommand{\expired}{\lightning}

\newcommand{\dedrule}[2]{\frac{#1}{#2}}
\newcommand{\restart}{R} %

\begin{document}

\title{%
Effective Stochastic Automata Model\\ Checking by Interval Abstraction\thanks{
Authors are sorted in alphabetical order.
This work was supported
by the EU's H2020 R\,\&\,I programme under MSCA grant agreement 101008233 (MISSION),
by the Interreg North Sea project STORM\_SAFE,
by SeCyT-UNC grant 33620230100384CB (MECANO),
and
by NWO VIDI grant VI.Vidi.223.110 (TruSTy).
}\\ (extended version)}
\titlerunning{Stochastic Automata Model Checking by Interval Abstraction}

\author{%
Pedro R.\ D'Argenio\inst{1}\orcidID{0000-0002-8528-9215}
\and
Arnd Hartmanns\inst{2}\orcidID{0000-0003-3268-8674}
\and
Annabell Petri\inst{2}$^{\text{\,(\raisebox{-1.6pt}{\Envelope})}}$%
}
\institute{%
Universidad Nacional de Córdoba and CONICET, Córdoba, Argentina
\and
University of Twente, Enschede, The Netherlands
$\cdot$ \email{annabell.petri@utwente.nl}
}
\authorrunning{P.\ R.\ D'Argenio, A.\ Hartmanns, A.\ Petri}

\maketitle

\begin{abstract}
Stochastic automata (SA) are a formal stochastic continuous-time model based on countdown timers whose expiration times follow general probability distributions.
SA are particularly useful to faithfully model and analyse dependable systems involving faults, maintenance, and repairs.
Effective SA analysis approaches have so far been limited to statistical model checking and thus deterministic SA, while previously proposed model-checking techniques apply to limited subclasses of SA only, or do not scale.
In this paper, we present the first dedicated SA model checking approach that is general and effective:
It puts few restrictions on the input SA, and we show in our experimental evaluation that it works well for nontrivial examples.
It combines a refinable interval abstraction of the continuous distributions with a direct application of the ``big time steps'' semantics of SA, providing upper/lower bounds on maximum/minimum reachability probabilities.
We extend the \modest and \jani modelling formalisms with support for SA, and provide a prototype implementation of our approach in Rust.
\end{abstract}

\section{Introduction}
\label{sec:Intro}

In dependable systems, component faults happen randomly over time, with schemes involving redundancy, regular inspections, maintenance, and repairs attempting to prevent escalation into overall system failure.
Similarly, in high-performance systems, random queueing and service times, message loss probabilities, and signal transmission and propagation delays determine the system's key performance properties such as throughput or response times.
Formal models for dependability and performance evaluation~\cite{BHHK10} need to capture these aspects of stochastic time in a realistic way, as close as possible to the real system or the available data.
Two widely-used formalisms are continuous-time Markov chains (CTMCs) and generalized stochastic Petri nets (GSPNs)~\cite{MCB84,EHKZ13}.
In both, all delays must be exponentially distributed.
The resulting memoryless nature of these models admits scalable analytical analysis methods such as probabilistic model checking~\cite{BAFK18,BHHK03}.
In reality, however, many inter-event times are not exponentially distributed, and hard to approximate by CTMCs/phase-type distributions:
time to failure is more realistically modelled by a Weibull distribution in most cases, and inspections typically happen around fixed intervals.
We thus need \emph{non-Markovian} formalisms that directly incorporate general continuous probability distributions, such as stochastic automata (SA)~\cite{DK05} or Petri nets with general transitions (\eg HPnGs~\cite{GRH14}).
In this paper, we use SA, because they are conceptually simple yet highly expressive.
Additionally, their compositionality makes them attractive for modelling complex component-based systems.

SA extend labelled transition systems with stochastic \emph{timers} (that may be \emph{reset} to values sampled from continuous probability distributions to then decrease over time and \emph{expire} when reaching value zero) and \emph{guard sets} that enable an edge when all their timers have expired.
As SA are non-Markovian \emph{and} allow nondeterministic choices, their analysis is hard.
The only available tool with dedicated SA support, \tool{Fig}~\cite{Bud22}, employs statistical model checking~\cite{LLTYSG19,AP18} (\ie Monte Carlo simulation) and is thus restricted to (weakly) deterministic SA~\cite{DM18}.
Analytical approaches for more general models, such as HPnGs or stochastic timed automata (STA)~\cite{BDHK06,HHH14}, do not scale for typical SA that have many discrete locations and several timers possibly reset repeatedly on loops.

We present the first dedicated model checking approach for SA that is effective (\ie works for typical and nontrivial SA models) and implemented in an available tool.
It is based on \emph{interval abstraction}~\cite{FHHWZ11}:
we replace each clock reset, \ie each sampling from a continuous distribution $\mu$, by a discrete distribution over a finite set of intervals that partition $\mu$'s support.
The choice of concrete value from an interval is nondeterministic, and we propagate the intervals symbolically through the semantics of the SA.
This turns the SA into a Markov decision process (MDP)~\cite{Bel57,How60} that overapproximates the SA's semantics.
As a result, the max.\ (min.) probability of eventually reaching a set of MDP states corresponding to a goal location in the SA---which we compute by value iteration~\cite{HJQW23}---is an upper (lower) bound on the corresponding probability in~the~SA.

We specify our approach formally and show the overapproximation property (\Cref{sec:OurApproach}).
We extend the \modest~\cite{BDHK06,HHHK13} and \jani~\cite{BDHHJT17} languages with support for timers and thus SA, providing a new user-friendly high-level modelling language for SA and a means to conveniently exchange SA models between tools (\Cref{sec:Languages}).
Our new SA model checking approach is implemented in a prototype tool written in Rust, which we use to evaluate its performance and scalability characteristics on example SA from and inspired by the literature (\Cref{sec:Experiments}).
We find that, despite its prototypical nature, our tool already works well for reasonably-sized SA even when using many intervals to obtain results with good accuracy.

\paragraph{Related work.}
\tool{Fig}'s simulation-based approach (mentioned above) is limited to a subclass of \emph{input-output} SA that syntactically guarantees all nondeterminism to be spurious, \ie minimum and maximum reachability probabilities coincide.

Two SA model checking algorithms were proposed by Bryans \etal in 2003~\cite{BBD03}.
They use a clock-based variant of SA with severe restrictions (clocks may only be used out of locations in which they have just been reset, and all distributions must be non-negative and bounded). %
The first algorithm uses a ``region tree'' and requires solving nested integrals over the probability density functions of each expired clock, which becomes infeasible as the tree grows in depth.
The second algorithm avoids the nested integrals by discretising the distributions.
No tool implements either algorithm today.
We note that our interval abstraction does not discretise the sampling results, but instead replaces continuous probabilistic choices by discrete probabilistic followed by continuous nondeterministic ones.
For nondeterministic SA, both algorithms require a strategy to be user-specified, while our approach includes the optimisation over nondeterministic choices.

Model checking tool support exists for the following stochastic-timed formalisms that are more general than SA:
(1)~STA extend timed automata by allowing clocks to be compared to real-valued variables whose values can be sampled from general probability distributions.
Compared to SA, they also support continuous nondeterministic delays, and are very (unrealistically) flexible in how clocks and the real-valued variables can be combined.
Interval abstraction has been implemented for STA in the \toolset's~\cite{HH14} \tool{mcsta} model checker~\cite{HHH14} in a way that reduces the STA to a probabilistic timed automaton~\cite{KNSS02}, which is turned into an MDP via the digital clocks approach~\cite{HMP92,KNPS06}.
The implementation is limited to abstraction intervals of width~1 (or infinity).
The digital clocks semantics enumerates all time steps of duration~1, making the MDP very large, but also preventing refinement to smaller intervals.
(2)~Stochastic hybrid automata (SHA)~\cite{FHHWZ11} further extend STA with continuous variables whose evolution over time is governed by differential equations and inclusions.
The \tool{prohver} tool applies interval abstraction to SHA, followed by a hybrid reachability analysis via \tool{Phaver}~\cite{Fre08}, to overapproximate reachability probabilities.
Here, the abstraction intervals have to be specified manually by the user, and the hybrid reachability step---unnecessary for SA---is the scalability and performance bottleneck.
(3)~While \tool{prohver} applies to linear SHA of type I, the \tool{Realyst} tool~\cite{DSSR23} focuses on rectangular hybrid automata with random events (RAEs), into which (some) SA can also be encoded.
\tool{Realyst}'s approach works for time-bounded reachability only, and requires structurally non-Zeno RAEs, which complicates the analysis of SA with cycles.
As every timer reset must be encoded as a separate random variable in the RAE, each of which \tool{Realyst} turns into one dimension of the model's continuous state, scalability---already limited by the complexity of hybrid reachability---for typical SA is low.

Finally, Buchholz \etal~\cite{BKS14} present model checking algorithms for a generalization of CTMCs with clocks that they also call ``stochastic automata'', but clearly point out that their definition is less general than the one of~\cite{DK05} that~we~use.

\section{Background}
\label{sec:Background}

Let $\RRnn$ be $[0, \infty)$, %
$\RRpos \defeq (0, \infty)$, and $\RR_\infty \defeq \RR \cup \set{\infty}$.
A discrete \emph{probability distribution} is a function $\mu\colon S \to [0,1]$ over a countable set $S$ with $\sum_{s \in S}\mu(s) = 1$.
Its \emph{support} is $\support{\mu} \defeq \set{ s \in S | \mu(s) > 0}$.
Now let $S$ be an arbitrary set.
We write $\powerset{S}$ for its powerset.
Given a $\sigma$-algebra $\sigma(S)$ over $S$, subset $A \subseteq S$ is \emph{measurable} if $A \in \sigma(S)$.
A \emph{probability measure} over $S$ is a function $\mu\colon \sigma(S) \to [0,1]$ with $\mu(S) = 1$ and $\mu(\bigcup_{i \in I}A_i) = \sum_{i \in I} \mu(A_i)$ for any countable index set $I$ and measurable pairwise disjoint $A_i \subseteq S$.
The \emph{support} of probability measure $\mu$ is the smallest compact (measurable) set $\support{\mu} \subseteq S$ s.t.\ $\mu(\support{\mu})=1$.
Given two probability measures $\mu_1$ and $\mu_2$ over $S$, the \emph{product measure} is the unique probability measure such that $(\mu_1 \otimes \mu_2) (A_1 \times A_2) = \mu_1(A_1) \cdot \mu_2 (A_2)$ for all measurable $A_1, A_2 \subseteq S$.
A probability measure $\mu$ over $\RR$ is absolutely continuous if $\mu(A) = 0$ for all $A$ of Lebesgue measure~$0$; we call it a \emph{continuous probability distribution}~(CPD).
$\DDists{S}$ is the set of all discrete probability distributions, $\Dists{S}$ of all probability measures, and $\CDists{S}$ of all CPDs over appropriate~$S$.

\paragraph{Stochastic automata}%
\cite{Dar99,DK05}
are labelled transition systems equipped with \emph{timers}:
real-valued variables that can be reset to values sampled from CPDs, decrease synchronously with time, and guard the SA's edges.
Formally:

\begin{definition}
\label{def:sa}
A \emph{stochastic automaton} (SA) is a tuple
$\tuple{\Locs, \IniLoc, \Acts, \Timers, \Edges}$ consisting of
finite sets of
locations \Locs with initial location $\IniLoc \in \Locs$,
action labels \Acts with $\Acts \cap \RR = \varnothing$,
timer variables \Timers,
and
edges $\Edges \subseteq \Locs \times \powerset{\Timers} \times \Acts \times \powerset{\Timers} \times \Locs$.
Each timer $x$ has an associated CPD $\mu_x$ with $\mu_x(\RRnn) = 1$.
We write $\ell \xtr{G, a, R} \ell'$ for $\tuple{\ell, G, a, R, \ell'} \in \Edges$; such an edge consists of
a source location~$\ell$,
a guard set $G \subseteq \Timers$,
an action $a \in \Acts$,
a reset set $R \subseteq \Timers$,
and
a target location~$\ell'$.
\end{definition}
Unless noted otherwise, we assume SA $\mathcal{M} = \tuple{\Locs, \IniLoc, \Acts, \Timers, \Edges}$ to be given in the remainder of this paper.
Let $\vec{v}\colon \Timers \to \RR$ be a timer valuation.
Let valuation $\vec{0}$ be defined by, for all $t \in \Timers$, $\vec{0}(t) = 0$, and $\vec{v} - d$ for $d \in \RR$ by $(\vec{v} - d)(t) = v(t) - d$. %

In an SA's semantics,
we start in $\IniLoc$ with all timers having value~$0$.
When in location $\ell$, all timers synchronously decrease at rate $1$ over time.
If $\ell \xtr{G, a, R} \ell'$ and all timers in $g$ have \emph{expired} (\ie have value $\leq 0$), then no further time can pass (maximal progress), and the edge must be taken:
the timers in $R$ are \emph{reset} (\ie assigned a new value randomly sampled from their associated CPD), and we continue in location~$\ell'$.
The semantics also allows \emph{discrete nondeterministic} choices between edges if there is another edge whose guard's timers have all expired.
Formally, it is given in terms of an infinite-state but finitely branching transition system:

\begin{definition}
\label{def:TPTS}
A \emph{timed probabilistic transition system} (TPTS) is defined as a tuple
$\tuple{\States, \InitialState, \Acts', \Trans}$
consisting of
a measurable set of states $\States$ with initial state $\InitialState \in \States$,
an alphabet $\Acts' = \RRpos \uplus \Acts$ partitioned into \emph{delays} in \RRpos and actions in $\Acts$,
and
transitions $\Trans \subseteq \States \times \Acts' \times \Dists{\States}$.
We write $s \xtr{\alpha} \mu$ for $\tuple{s, \alpha, \mu} \in \Trans$; such a transition consists of
a source state $s$, label $\alpha \in \Acts'$, and measure over target states $\mu$.
For all $s \in \States$, we require
$\Trans_s \defeq \set{ \tuple{s, \alpha, \mu} \in \Trans }$ to be finite
and
$|\Trans_s| = 1$ if $\exists\,\tuple{s, d, \mu} \in \Trans \colon d \in \RRpos$ (\ie states admitting delays are deterministic).
\end{definition}

\begin{definition}
\label{def:SASemantics}
The TPTS
$\sem{M} \defeq \tuple{\Locs \times (\Timers \to \RR_\infty), \tuple{\IniLoc, \vec{0}}, \Acts \cup \RRpos, \Trans_{\sem{M}}}$
is the \emph{semantics of an SA} $\mathcal{M}$ with
$\Trans_{\sem{M}}$ being the smallest relation satisfying
\begin{gather*}
\dedrule{
	\ell \xtrl{G, a, R} \ell' \quad \bigwedge_{x_i \in G}{\vec{v}(x_i)\leq 0}
}{
	\tuple{\ell, \vec{v}} \xtrl{a} \set{s' \mapsto 1} \otimes \left(\bigotimes_{x_i \in R}\mu_{x_i}\right) \otimes \left(\bigotimes_{x_i \notin R} \set{\vec{v}(x_i) \mapsto 1}\right)
}
\quad(\mathrm{S1})
\\[2pt]
\text{and}\quad
\dedrule{
	0 < d = \min\set{\max\set{ \vec{v}(x_i)\mid x_i \in G } \mid \ell \xtrl{G, a, R} \ell'}
}{
	\tuple{\ell, \vec{v}} \xtrl{d} \set{\tuple{\ell, \vec{v} - d} \mapsto 1}
}
\quad(\mathrm{S2})
\end{gather*}
with all resulting deadlock states receiving an $\infty$-labelled self-loop.
\end{definition}
This definition is the ``big time steps'' semantics of SA~\cite{DGHS18}:
Rule S2 admits only the delay that directly enables an edge, and no intermediate delays as in other ``dense time steps'' semantics.
For the \emph{closed} SA and properties we use, the two semantics are equivalent (in essence, weakly bisimilar in a time-abstract way).

\begin{wrapstuff}[type=figure,width=3.85cm]
	\centering
	\begin{tikzpicture}[on grid,auto]
		\node[state] (l0) {$\ell_0$};
		\coordinate[left=0.3 of l0.west] (start);
		\node[] (distr) [above right=0.2 and 1.3 of l0,align=left] {$x\colon \textsc{Uni}(2, 4)$\\$y\colon \textsc{Uni}(0, 5)$};
		\node[state] (l1) [below=1.2 of l0] {$\ell_1$};
		\node[state] (l2) [below left=1 and 0.75 of l1] {$\ell_2$};
		\node[state] (l3) [below right=1 and 0.75 of l1] {$\ell_3$};
		;
		\path[->]
		(start) edge node {} (l0)
		(l0) edge node[right,pos=0.3,inner sep=0.5mm] {\strut$\varnothing, \texttt{a},$} node[right,pos=0.7,inner sep=0.5mm] {\strut$\text{\restart}(\{ x, y \})$} (l1)
		(l1) edge[] node[left,pos=0.15,inner sep=1mm] {\strut$\{y\}, \texttt{b}$} %
		(l2)
		(l1) edge[] node[right,pos=0.15,inner sep=1mm] {\strut$\{x\}, \texttt{c}$} %
		(l3)
		(start) edge node {} (l0)
		(l2) edge[out=150,in=-145] %
		 node[left,pos=0.5,inner sep=0.5mm] {\strut$\varnothing, \tau$} (l0)	
		;
	\end{tikzpicture}
	\caption{Example SA $\mathcal{M}_\mathit{re}$.}
	\label{fig:sa_running_example}
\end{wrapstuff}

\begin{example}
\label{ex:RunningSA}
\hspace{-4.5ex} %
\Cref{fig:sa_running_example} shows our running example SA $\mathcal{M}_\mathit{re}$.
It has two timers $x$ and $y$ whose expiration times are sampled from the given uniform distributions.
The two edges labelled \texttt{a} and $\tau$ out of locations $\ell_0$ and $\ell_2$ have empty guard sets:
they must be taken immediately upon entering those locations.
In $\ell_1$, the outgoing edges can only and must be taken when the respective timer has expired.
If $y$ expires before $x$, we go to $\ell_2$, and to $\ell_3$ otherwise.
Timers are only reset on the edge from $\ell_0$ to $\ell_1$ as denoted by the reset set $\text{\restart}(\{ x, y \})$.
We omit empty~reset~sets.
\end{example}
In general, SA are \emph{compositional}:
They can be equipped with a notion of \emph{parallel composition} and an \emph{open} semantics with dense time steps and without maximal progress.
This facilitates elegant and compact models of systems consisting of concurrent components.
For analysis, usually the composition's closed semantics which excludes parallel composition and synchronization on shared actions between SA is considered, and thus we use single closed SA in this paper.

\paragraph{Properties.}
Given an SA, we are interested in the probability of reaching any of a set of goal locations $\mathcal{G} \subseteq \Locs$, defined via the probability of reaching any state in $\sem{\mathcal{G}} \defeq \set{ \tuple{\ell, \vec{v}} \mid \ell \in \mathcal{G} }$ in the TPTS semantics.
In a TPTS, a \emph{path} is an infinite sequence $s_0 \tuple{a_0, \mu_0} s_1 \tuple{a_1, \mu_1} \!\ldots \in (\States \times (\Acts' \times \Dists{\States}))^\omega$ with $s_0 = \InitialState$ and $\forall i\colon s_i \xtr{a_i} \mu_i \wedge s_{i+1} \in \support{\mu_i}$.
Paths resolve all nondeterministic and probabilistic choices.
A \emph{strategy} resolves only the nondeterminism:

\begin{definition}
\label{def:Strategy}
$\sched \colon \States \to \Acts' \times \Dists{\States}$ is a \emph{strategy} if $\sched(s) = \tuple{\alpha, \mu} \Rightarrow s \xtr{\alpha} \mu$.
\end{definition}
We work with memoryless deterministic strategies.
Randomised ones would be needed for \eg multi-objective properties~\cite{FKNPQ11},
and for reachability properties on SA, memoryless and history-dependent strategies are equally expressive~\cite{DGHS18}.

Every strategy $\sched$ induces a probability measure $\mathbb{P}_{\!\sched}$ on the space of all paths.
For a formal definition, see~\cite{Wol12}.
As is usual, we restrict to \emph{non-Zeno} strategies: %
we require $\mathbb{P}_{\!\sched}(\Pi_\infty) = 1$, where $\Pi_\infty$ is the set of paths whose sum of delays is~$\infty$.

\begin{definition}
\label{def:ReachProb}
Consider a TPTS that is the semantics of an SA with goal set $\mathcal{G} \subseteq \Locs$.
Then $\Pmin{\diamond\,\mathcal{G}}$ and $\Pmax{\diamond\,\mathcal{G}}$ are the minimum and maximum \emph{reachability probabilities} for $\mathcal{G}$, defined as $\Pmin{\diamond\,\mathcal{G}} = \inf_{\sched} \mathbb{P}_{\!\sched}(\lozenge_\mathcal{G})$ and $\Pmax{\diamond\,\mathcal{G}} = \sup_{\sched} \mathbb{P}_{\!\sched}(\lozenge_\mathcal{G})$, respectively, with $\lozenge_\mathcal{G}$ the set of paths that contain a state in $\sem{\mathcal{G}}$.
\end{definition}

\paragraph{Markov decision processes}
are finite TPTSs whose transitions lead into \emph{discrete} probability distributions.
They are the result of interval abstraction of SA.

\begin{definition}
\label{def:MDP}
A \emph{Markov decision process} (MDP) is a tuple
$\tuple{\States, \InitialState, \Acts, \Trans}$
with finite sets of
states $\States$ with initial state $\InitialState \in \States$,
action labels $\Acts$,
and
transitions $\Trans \subseteq \States \times \Acts \times \DDists{\States}$.
For all $s \in \States$, we require
non-empty $\Trans_s \defeq \set{ \tuple{s, \alpha, \mu} \in \Trans }$. %
\end{definition}
We can define reachability probabilities for MDPs in a similar way as for TPTSs.
On MDPs, they can be computed or approximated efficiently via linear programming, policy iteration, or various variants of value iteration (VI)~\cite{HJQW23}.

\section{Big-Steps Interval Abstraction for SA}
\label{sec:OurApproach}

Interval abstraction of models with continuous probability distributions was introduced for the analysis of SHA~\cite{FHHWZ11}, where it turned the SHA into a \emph{probabilistic} hybrid automaton (PHA)~\cite{Spr00,ZSRHH10}, which could in turn be analysed by combining a hybrid reachability checker with MDP model checking.
Interval abstraction replaces sampling from a CPD $\mu$ by a combination of
(1)~discrete sampling from a finite set of intervals that partition (or, in general, cover) $\support{\mu}$, with interval $I_i$ chosen with probability $\mu(I_i)$, followed by
(2)~a continuous nondeterministic choice to choose a concrete value from the sampled interval.
Interval abstraction does not discretise the final outcome:
Every value in $\support{\mu}$ can still be chosen, but instead of being fully random, the strategy now makes the choice within the sampled interval.
Since we use absolutely continuous distributions, we can neglect singleton overlaps between the intervals (as they have probability~$0$), and will thus use closed intervals throughout.
The intervals to use for the abstraction can be specified in different ways.
In our implementation and examples, we specify the desired probability mass of each interval, but we could equally specify the interval width or just the intervals themselves.

\begin{example}
\label{ex:Intervals}
When abstracting uniform distribution $\textsc{Uni}(0, 5)$ from \Cref{ex:RunningSA} with per-interval probability of $0.5$, we obtain intervals $[0, 2.5]$ and $[2.5, 5]$.
For $\textsc{Exp}(1)$, the exponential distribution with rate~$1$, we obtain $[0, -{\ln 0.5}]$ and $[-{\ln 0.5}, \infty)$.
\end{example}
We apply interval abstraction to the semantics of \Cref{def:SASemantics}, but then propagate the intervals symbolically. %
We thus avoid (in effect: postpone) the explicit continuous nondeterministic choices and obtain a finite MDP.
In this section, we specify our transformation formally and establish its relation to the TPTS semantics.

\subsection{SA Interval Abstraction}
\label{sec:OurIntervalAbstraction}

Let $\OCInt$ denote the set of all closed intervals in $\RRnninf$ together with all intervals of the form $[l,\infty)$.
Given interval $I$, let $lI=\inf(I)$ and $uI=\sup(I)$.
A set of intervals $B\subseteq\OCInt$ is \emph{non-overlapping} if for all
$I_1,I_2\in B$, $I_1\cap I_2\neq\emptyset$ implies either $uI_1=lI_2$ or $uI_2=lI_1$.  Let
$\SOCInt$ be the set of all non-overlapping subsets of $\OCInt$.
A \emph{timer-domain partition} is a function $\Part\colon \Timers \to \SOCInt$
s.t.\ for all $x\in\Timers$, $\bigcup {\Part(x)} = \support{\mu_x}$.%

Given an SA $\mathcal{M}$ as usual and a timer-domain partition
$\Part$, we define its \emph{interval abstraction} as the MDP
$\da(\mathcal{M})=(S_\da, \tuple{ \IniLoc, \vec{0}_\da }, \Acts \cup \set{\soj}, \Trans_\da)$ where
\begin{itemize}
\item%
$\soj\notin\Acts$ is a fresh new label indicating some
unspecified sojourn time;
\item%
$S_\da = \Locs \times (\Timers \to \OCInt \cup \set{\expired})$ where, for each state
$\tuple{\ell, b} \in S_\da$ and $x \in \Timers$, if $b(x)=\expired$, timer $x$ has
already expired in this state and, otherwise, $b(x)$ is the interval
containing all possible values timer $x$ can take in this state with
$lb(x)$ and $ub(x)$ denoting the interval's lower and upper bound,  respectively, and for a set $C\subseteq\Timers$ of timers, we let
$lb(C)=\max\set{lb(x)\mid x\in C \wedge b(x)\neq\expired}$ (the delay after which \emph{all} timers \emph{can} have expired) and
$ub(C)=\max\set{ub(x)\mid x\in C \wedge b(x)\neq\expired}$ (the delay after which \emph{all} timers \emph{must} have expired);
\item
$\vec{0}_\da$ maps every timer to $\expired$ in the initial state;
and
\item%
$\Trans_\da$ is the smallest relation satisfying the following two inference rules:
\[
\dedrule{
	\ell \xtrl{G, a, R} \ell'
	\quad
	\forall x \in G\colon b(x) = \expired
}{
	\tuple{\ell,b} \xtrl{a} \pi^b_{R, \ell'}
}
\qquad(\text{R1})
\]
with the transitions' target distributions defined by
\[
\pi^b_{R, \ell'}(\tuple{\ell'', b'})=
\begin{cases}
	\displaystyle\prod_{x \in R}\mu_x(b'(x))
		& \text{if }
		\begin{array}[t]{l}
			\ell''=\ell',\\
			\forall{x \notin R}\colon b'(x)=b(x), \text{ and}\\
			\forall{x \in R}\colon b'(x)\in\Part(x)
		\end{array} \\
	0 & \text{otherwise}
\end{cases}
\]
and
\[
\dedrule{
	\ell \xtrl{G, a, R} \ell'
	\quad
	\forall \ell \xtrl{G'\!\!,\,a'\!\!,\,R'} \ell'' \colon
	(\exists\, x \in G'\colon b(x) \neq \expired)
	\wedge
	lb(G) \leq ub(G')
}{
	\tuple{\ell, b} \xtrl{\soj} \set{ \tuple{\ell, d_{\ell, G}^b} \mapsto 1 }
}
\qquad(\text{R2})
\]
with
\[d_{\ell, G}^b(x)=
\begin{cases}
	\expired &\hfill\makebox[0em][r]{if $x \in G \vee b(x)=\expired \vee ub(x) \leq lb(G)$}\\[1ex]
	\Big[
		\max\{ 0, lb(x) {-} \min\{ ub(G') \mid \ell \xtrl{G'\!\!,\,a'\!\!,\,R'} \ell'' \}\},
		ub(x) {-} lb(G)
	\Big]
	&\text{else}
\end{cases}
\]
with all resulting deadlock states receiving a $\soj$-labelled self-loop.
\end{itemize}
Rule R1 creates an MDP transition to $\pi^b_{R, \ell'}$ whenever an edge in the SA would be enabled in the current state;
the discrete distribution $\pi^b_{R, \ell'}$ samples an interval for every timer that is reset.
Rule R2 creates $\soj$-labelled transitions, which represent the symbolic passage of some time, as follows:
\begin{itemize}
\item
The rule picks an edge $e = \ell \xtr{G, a, R} \ell'$ out of the current location $\ell$ in the SA that should become enabled.
If $\ell$ has no edges, then no symbolic time passes.
\item
First, \emph{no} edge must be enabled in the current state---otherwise, by maximal progress, it would have to be taken immediately.
This is condition ``$\exists\, x \in G'\colon b(x) \neq \expired$'':
In every guard set, at least one timer is not expired.
\item
Additionally, no edge may be \emph{forced} to be taken earlier.
This is condition ``$lb(G) \leq ub(G')$'':
else, edge $e'$ with guard $G'$
\emph{must} be taken after waiting for $ub(G')$ time units,
while $e$ needs \emph{at least} delay $lb(G) > ub(G')$ to be enabled.
\item
If these conditions hold, then all timers in $G$ \emph{can} expire, and we let them do so.
Function $d_{\ell, G}^b$ determines the new intervals of values that all timers not in $G$ can then be in:
The first case states that all timers are expired ($\expired$) that
(i)~are in $G$,
(ii)~have previously expired, or
(iii)~whose upper bound is at most $lb(G)$ (which applies to timers not in~$G$ that have ``low'' values).
In the second case, we have an unexpired timer $x \notin G$ that is not \emph{forced} to expire now.
Its lower bound is reduced by the maximum amount of time that could have passed, which is the lowest upper bound of the guard set of any edge out of $\ell$, but not below~$0$.
If the new lower bound is $0$, $x$ \emph{could} have expired, but since $x \notin G$, it by choice \emph{did not} expire.\!\footnote{Two timers expiring simultaneously has probability $0$ because of our restriction to absolutely continuous distributions, so we can disregard this case here.}
The upper bound, in turn, is decreased by the lowest possible delay needed for all timers in $G$ to expire.
\end{itemize}

\begin{figure}[t]
	\centering
	\begin{tikzpicture}[on grid,auto]
		\node[loc] (l0) {${\ell_0}^{\expired \times \expired}_\mathit{R1}$};
		\coordinate[left=0.3 of l0.west] (start);
		\node[dot] (dot0) [below=0.65 of l0] {};
		\node[loc] (l11) [below left=0.75 and 4.6 of dot0] {${\ell_1}^{[2,3] \times [0,2.5]}_\mathit{R2}$};
		\node[loc] (l12) [below left=0.75 and 1.3 of dot0] {${\ell_1}^{[2,3] \times [2.5,5]}_\mathit{R2}$};
		\node[loc] (l13) [below left=0.75 and -1.3 of dot0] {${\ell_1}^{[3,4] \times [0,2.5]}_\mathit{R2}$};
		\node[loc] (l14) [below left=0.75 and -4.5 of dot0] {${\ell_1}^{[3,4] \times [2.5,5]}_\mathit{R2}$};

        \path[->]
		(start) edge[] (l0)
		(dot0) edge[bend right=10] node[above,pos=0.7] {$0.25$} (l11)
		(dot0) edge[bend right=10] node[left] {$0.25~$} (l12)
		(dot0) edge[bend left=10] node[right] {$~0.25$} (l13)
		(dot0) edge[bend left=10] node[above,pos=0.7] {$0.25$} (l14)
        ;

		\node[loc] (l1-1P2-1) [below left=1.0 and 0.8 of l11] {${\ell_1}^{\expired \times [0,0.5]}_\mathit{R1}$};
		\node[loc] (l1-1P2-2) [below left=1.0 and -0.8 of l11] {${\ell_1}^{[0,3] \times \expired}_\mathit{R1}$};
		\node[loc] (l1-2P2-1) [below left=1.0 and 0.8 of l12] {${\ell_1}^{\expired \times [0,3]}_\mathit{R1}$};
		\node[loc] (l1-2P2-2) [below left=1.0 and -0.8 of l12] {${\ell_1}^{[0,0.5] \times \expired}_\mathit{R1}$};
		\node[loc] (l1-3P2-1) [below left=1.0 and 0 of l13] {${\ell_1}^{[0.5,4]\times \expired}_\mathit{R1}$};
		\node[loc] (l1-4P2-1) [below left=1.0 and -0.8 of l14] {${\ell_1}^{[0,1.5] \times \expired}_\mathit{R1}$};
		\node[loc] (l1-4P2-2) [below left=1.0 and 0.8 of l14] {${\ell_1}^{\expired \times [0,2]}_\mathit{R1}$};

        \path[->]
        (l11) edge node[left] {$\delta_x$} (l1-1P2-1)
		(l11) edge node[right] {$\delta_y$} (l1-1P2-2)
		(l12) edge node[left] {$\delta_x$} (l1-2P2-1)
		(l12) edge node[right] {$\delta_y$} (l1-2P2-2)
		(l13) edge node[left] {$\delta_y$} (l1-3P2-1)
		(l14) edge node[right] {$\delta_y$} (l1-4P2-1)
		(l14) edge node[left] {$\delta_x$} (l1-4P2-2)
        ;

        \node[loc] (l3-3) [below left=1.25 and 0 of l1-1P2-1] {${\ell_3}^{\expired \times [0,0.5]}_\mathit{\phantom{R}}$};
		\node[loc] (l2-4) [below left=1.25 and 0 of l1-1P2-2] {${\ell_2}^{[0,3] \times \expired}_\mathit{R2}$};
		\node[loc] (l3-1) [below left=1.25 and 0 of l1-2P2-1] {${\ell_3}^{\expired \times [0,3]}_\mathit{\phantom{R}}$};
		\node[loc] (l2-1) [below left=1.25 and 0 of l1-2P2-2] {${\ell_2}^{[0,0.5] \times \expired}_\mathit{R2}$};
		\node[loc] (l2-2) [below left=1.25 and 0 of l1-3P2-1] {${\ell_2}^{[0.5,4] \times \expired}_\mathit{R2}$};
		\node[loc] (l3-2) [below left=1.25 and 0 of l1-4P2-2] {${\ell_3}^{\expired \times [0,2]}_\mathit{\phantom{R}}$};
		\node[loc] (l2-3) [below left=1.25 and 0 of l1-4P2-1] {${\ell_2}^{[0,1.5] \times \expired}_\mathit{R2}$};

		\node[loc] (l0-1) [below=1.0 of l2-1] {${\ell_0}^{[0,0.5] \times \expired}_\mathit{R1}$};
		\node[loc] (l0-2) [below=1.0 of l2-2] {${\ell_0}^{[0.5,4] \times \expired}_\mathit{R1}$};
		\node[loc] (l0-3) [below=1.0 of l2-3] {${\ell_0}^{[0,1.5] \times \expired}_\mathit{R1}$};
		\node[loc] (l0-4) [below=1.0 of l2-4] {${\ell_0}^{[0,3] \times \expired}_\mathit{R1}$};

        \node[] (dot1) [below=0.75 of l0-1] {};
		\node[] (dot2) [below=0.75 of l0-2] {};
		\node[] (dot3) [below=0.75 of l0-3] {};
		\node[] (dot4) [below=0.75 of l0-4] {};

		;
		\path[]
		(l0) edge node[pos=0.55] {\texttt{a}} (dot0)
		(l0-1) edge[dotted, ->] node[pos=0.55] {\texttt{a}} (dot1)
		(l0-2) edge[dotted, ->] node[pos=0.55] {\texttt{a}} (dot2)
		(l0-3) edge[dotted, ->] node[pos=0.55] {\texttt{a}} (dot3)
		(l0-4) edge[dotted, ->] node[pos=0.55] {\texttt{a}} (dot4)
		;
		\path[->]
		(l1-1P2-1) edge node[pos=0.25] {\texttt{c}} node[dot,pos=0.5,anchor=center] {} node[pos=0.725] {1} (l3-3)
		(l1-1P2-2) edge node[pos=0.3] {\texttt{b}} node[dot,pos=0.5,anchor=center] {} node[pos=0.725] {1} (l2-4)
		(l1-2P2-1) edge node[pos=0.3] {\texttt{c}} node[dot,pos=0.5,anchor=center] {} node[pos=0.725] {1} (l3-1)
		(l1-2P2-2) edge node[pos=0.3] {\texttt{b}} node[dot,pos=0.5,anchor=center] {} node[pos=0.725] {1} (l2-1)
		(l1-3P2-1) edge node[pos=0.3] {\texttt{b}} node[dot,pos=0.5,anchor=center] {} node[pos=0.725] {1} (l2-2)
		(l1-4P2-1) edge node[pos=0.3] {\texttt{b}} node[dot,pos=0.5,anchor=center] {} node[pos=0.725] {1} (l2-3)
		(l1-4P2-2) edge node[pos=0.3] {\texttt{c}} node[dot,pos=0.5,anchor=center] {} node[pos=0.725] {1} (l3-2)
		(l2-1) edge node {$\tau$} (l0-1)
		(l2-2) edge node {$\tau$} (l0-2)
		(l2-3) edge node {$\tau$} (l0-3)
		(l2-4) edge node {$\tau$} (l0-4)
		(l3-1) edge[loop,out=-70,in=-110,looseness=4.5] node[pos=0.5] {$\soj$} (l3-1)
		(l3-2) edge[loop,out=-70,in=-110,looseness=4.5] node[pos=0.5] {$\soj$} (l3-2)
		(l3-3) edge[loop,out=-70,in=-110,looseness=4.5] node[pos=0.5] {$\soj$} (l3-3)
		;
	\end{tikzpicture}
	\caption{Excerpt of $\da(\mathcal{M}_\mathit{re})$ created with per-interval probability $0.5$. The superscript intervals denote the possible valuations for timers $x$ (on the left) and $y$ (on the right).}
	\label{fig:running_example_after_transformation}
\end{figure}

\begin{example}
\label{ex:RunningAbstraction}
Applying interval abstraction to $\mathcal{M}_\mathit{re}$ of \Cref{ex:RunningSA} with a per-interval probability of $0.5$ results in the MDP of which an excerpt is shown in \Cref{fig:running_example_after_transformation}.
In $\ell_0$, initially, all timers are expired, so the \texttt{a}-labelled edge in the SA can be taken.
It resets both timers, thus the corresponding transition in the MDP as per R1 leads to a distribution over the four combinations of intervals for $x$ and $y$.

Location $\ell_1$ has two edges, with two distinct non-empty guards, $\set{y}$ and $\set{x}$.
By R2, we get up to two $\soj$-labelled transitions out of each state in the second row of \Cref{ex:RunningSA}.
For clarity, we instead label the ones where $\set{y}$ ($\set{x}$) was chosen to expire with $\delta_y$ ($\delta_x$).
Note that in ${\ell_1}^{[3,4] \times [0, 2.5]}_\mathit{R2}$, only the expiration of $\set{y}$ is considered because the expiration of $\set{x}$ would violate the ``$lb(G) \leq ub(G')$'' requirement:
$ub(\set{y}) = 2.5 \leq lb(\set{x}) = 3$, \ie $y$ \emph{must} expire before~$x$.

In the third row, the timers of the chosen guard set have expired, and the corresponding SA edge labelled \texttt{b} or \texttt{c} can be taken as per R1.
Observe that timers not in the chosen guard set may have lower bound $0$ now but are not considered expired, as explained earlier;
thus \eg in state ${\ell_1}^{[0,0.5] \times \expired}_\mathit{R1}$ only the \texttt{b}-labelled transition is possible despite $lb(x) = 0$.

Finally, in the fourth row, we can either loop back to $\ell_0$ from $\ell_2$, or else we are stuck in $\ell_3$.
In the former case, the \texttt{a}-labelled transitions reset both timers, and thus lead into the same distribution (not shown in \Cref{ex:RunningSA}) as the \texttt{a}-labelled transition out of the initial state, looping back into the second row.
\Cref{ex:RunningSA} thus includes all 23 states of the interval abstraction and all transitions.

\end{example}

\subsection{Correctness}
\label{sec:AbstractionCorrectness}

In order to check the optimal probability of reaching a location $\ell$ (or a set of locations) in the SA, we compute the optimal probability of reaching a state $\tuple{\ell, g}$ for any $g$ in the interval abstraction MDP (see \Cref{sec:Background}).
Since the SA is finite, and the abstraction intervals are fixed for every timer, this MDP must also be finite---but it can be large when many different combinations of (post-delay) intervals can be reached.
MDP model checking, however, has good practical scalability, solving models of hundreds of millions of states on standard hardware today.

The interval abstraction does not preserve reachability probabilities exactly, however; it rather introduces an overapproximation from two sources:
First, MDP strategies can choose the precise value to use within the intervals whereas this choice is fully random in the SA semantics.
Second, this choice is actually postponed from the moment the edge that does the sampling is taken to the point that one of the transitions generated by R2 must be chosen.
This moves nondeterministic choices over (other) probabilistic choices, which also gives more information to the strategies.\!\footnote{Consider tossing a coin to \emph{then} nondeterminstically guess its value vs.\ the inverse.}
Thus maximum (minimum) reachability probabilities in the MDP overapproximate (underapproximate) the corresponding probabilities in the SA's semantics:

\begin{theorem}
\label{thm:IntervalApprox}
Given an SA $\mathcal{M}$ as usual and a set of goal locations $\mathcal{G} \subseteq \Locs$, let $p_\mathit{min}^\mathcal{M} = \Pmin{\diamond\,\mathcal{G}}$ and $p_\mathit{max}^\mathcal{M} = \Pmax{\diamond\,\mathcal{G}}$ be the optimal reachability probabilities obtained from $\sem{\mathcal{M}}$ as per \Cref{def:ReachProb}, and let $p_\mathit{min}^{\da(\mathcal{M})}$ and $p_\mathit{max}^{\da(\mathcal{M})}$ be the analogous probabilities obtained from the MDP $\da(\mathcal{M})$. Then
$$
p_\mathit{min}^{\da(\mathcal{M})} \leq p_\mathit{min}^\mathcal{M}
\qquad\text{and}\qquad
p_\mathit{max}^\mathcal{M} \leq p_\mathit{max}^{\da(\mathcal{M})}.
$$
\end{theorem}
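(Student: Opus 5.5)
We argue by strategy simulation. For every strategy $\sched$ of $\sem{\mathcal{M}}$ (non-Zeno, memoryless deterministic), we construct a strategy $\sched_\da$ of $\da(\mathcal{M})$ whose probability of reaching $\sem{\mathcal{G}}$ is the same. Taking suprema gives $p_\mathit{max}^\mathcal{M} \leq p_\mathit{max}^{\da(\mathcal{M})}$, and infima give $p_\mathit{min}^{\da(\mathcal{M})} \leq p_\mathit{min}^\mathcal{M}$. Since the MDP is finite, a history-dependent $\sched_\da$ is allowed: optimal values over memoryless and over history-dependent strategies coincide for MDP reachability. The simulating strategy may also be randomised, because randomisation does not exceed the optimum either.

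\textbf{Abstraction relation.} Call a concrete state $\tuple{\ell,\vec v}$ \emph{covered} by an abstract state $\tuple{\ell,b}$ if, for every timer $x$, either $b(x)=\expired$ and $\vec v(x)\leq 0$, or $b(x)\neq\expired$ and $\vec v(x)\in b(x)$ with $\vec v(x)>0$. The simultaneous-expiry cases have probability zero by absolute continuity, and we discard them throughout.

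\textbf{Single steps.} We check three facts, one per rule.
\begin{itemize}
\item \emph{Rule S1 vs.\ R1.} An action transition of S1 from a covered state is enabled exactly when R1 is enabled. For R1 this uses $b(x)=\expired$ for all $x\in G$. Unexpired timers are strictly positive; this matches S1 for relevant states, which only arise as S2 targets.
\item \emph{Reset disintegration.} The product measure $\bigotimes_{x\in R}\mu_x$ decomposes as the sum over interval tuples $(I_x)_{x\in R}\in\prod_{x\in R}\Part(x)$ of the weight $\pi^b_{R,\ell'}$ times the measure conditioned on $\prod_x I_x$. Consequently, each sampled valuation is covered, almost surely, by the MDP successor that receives exactly that probability.
\item \emph{Rule S2 vs.\ R2.} Let the concrete delay be $d=\min_{e'}\max\{\vec v(x)\mid x\in G'\}$, attained by an edge with guard $G$. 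We show that R2 is applicable for this $G$. The condition ``some timer in $G'$ unexpired'' holds because $d>0$. The inequality $lb(G)\leq ub(G')$ follows from $lb(G)\leq\max_{G}\vec v = d\leq\max_{G'}\vec v\leq ub(G')$. Finally, $\vec v-d$ is covered by $d^b_{\ell,G}$, since $d\in[lb(G),\min_{e'} ub(G')]$. This yields the lower bound $\max\{0,\,lb(x)-\min ub(G')\}$ and the upper bound $ub(x)-lb(G)$. Timers with $\vec v(x)\leq d$ become expired; we must check that the first case of $d^b$ is consistent with this, or that such timers occur only with probability zero.
\end{itemize}

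\textbf{Lifting to paths.} We define $\sched_\da$ by tracking the history and the conditional distribution of the concrete state given the abstract history. After each R1 step, the concrete state is distributed according to $\mu$ conditioned on the chosen intervals. At an R2 state, $\sched_\da$ chooses each $\soj$-transition (for guard $G$) with the conditional probability that $\sched$ reaches the edge with guard $G$ next. At an R1 state, it mimics the action $\sched$ picks, with the corresponding conditional probabilities. The resulting process is a coupling: the projection of concrete paths onto the sequence of abstract states has the same law as the paths of $\da(\mathcal{M})$ under $\sched_\da$. Reaching $\sem{\mathcal{G}}$ depends only on locations, so the reachability probabilities coincide. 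Non-Zenoness of $\sched$ is not needed on the abstract side. We must also argue that deadlocks match: an $\infty$ self-loop corresponds to a $\soj$ self-loop.

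\textbf{Main obstacle.} The hardest part is making the coupling measure-theoretically rigorous. The abstract strategy must depend on conditional distributions over uncountable state sets, and we need a monotone-class or inductive argument over path prefixes showing that the finite-dimensional distributions agree. A second difficulty is the boundary case in which a timer not in $G$ satisfies $\vec v(x)\leq d$, where $d^b$ may still assign an interval. We handle it by noting that covered concrete valuations with $ub(x)\leq lb(G)$ are expired, and that the remaining mismatches lie on probability-zero ties.
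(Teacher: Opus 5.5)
Your checks that R2 applies to the guard $G$ attaining the concrete delay, and that $\vec v(x)-d\in d^b_{\ell,G}(x)$ whenever $\vec v(x)>d$, are correct. The proof breaks at the ``boundary case'' you defer to the end, and that case is not a probability-zero tie.

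The S2 delay $d$ is determined by the guards alone. So a timer $x\notin G$ that is not the only still-running timer of some guard at $\ell$ can expire \emph{strictly} before $d$ with positive probability. Examples are a timer $x$ that occurs in no guard at $\ell$, or only in a guard $\{x,z\}$ with $z$ still running. For such $x$ you get $lb(x)-\min ub(G')\le\vec v(x)-d<0$. Yet $ub(x)>lb(G)$ is perfectly possible, so $d^b_{\ell,G}(x)=[0,ub(x)-lb(G)]$ rather than $\expired$. Your covering relation is therefore violated on a set of positive measure, and the S1/R1 correspondence fails: S1 can take an edge whose guard contains $x$, while R1 requires $b(x)=\expired$.

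This cannot be patched within the given definitions, because the inequalities themselves fail for such SA. Take $x,y\sim\textsc{Uni}(0,1)$ and edges $\ell_0\xtr{\varnothing,a,\{x,y\}}\ell_1$, $\ell_1\xtr{\{y\},b,\varnothing}\ell_2$, $\ell_2\xtr{\{y\},c,\varnothing}\ell_A$ and $\ell_2\xtr{\{x\},e,\varnothing}\ell_B$, with goal $\{\ell_B\}$.
\begin{itemize}
\item In $\sem{\mathcal{M}}$ one enters $\ell_2$ with $x=X-Y$ and $y=0$. The $e$-edge is enabled iff $X\le Y$, so $p_\mathit{max}^{\mathcal{M}}=\frac12$.
\item In $\da(\mathcal{M})$ with $\Part(x)=\Part(y)=\{[0,1]\}$, the unique $\soj$-step in $\ell_1$ yields $b(x)=[0,1]$. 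In $\ell_2$, R1 enables only the $c$-edge, and R2 is blocked precisely because that edge is enabled. Hence $p_\mathit{max}^{\da(\mathcal{M})}=0$.
\end{itemize}
With goal $\{\ell_A\}$ the min-inequality fails in the same way. Finer partitions do not help: any partition still gives $\sum_{ub(I)\le lb(J)}\mu_x(I)\mu_y(J)<\frac12$.

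The paper's route goes through the intermediate $\da(\sem{\mathcal{M}})$, collapses delays, and uses a weak simulation in which $\soj$-steps are unobservable. That route would have to absorb the silent expiry of $x$ by a later extra $\soj$-step, which R2 forbids whenever some edge is already enabled; the paper's sketch does not address this either. A correct version needs either an extra hypothesis or a change to the abstraction. One option is to assume that no timer can expire in a location without being the only running timer of some guard there. Another is to modify $d^b_{\ell,G}$ so that the strategy may mark $x$ as $\expired$ when its new lower bound is $0$.

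With such a fix, your direct one-step coupling from $\sem{\mathcal{M}}$ to $\da(\mathcal{M})$, using a randomised, history-dependent abstract strategy, is a legitimate and more explicit alternative to the paper's two-stage simulation argument. What would remain is the measure-theoretic bookkeeping you already describe.
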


\begin{figure}[t]
	\begin{minipage}[b]{0.32\textwidth}
	\raggedright%
	\begin{tikzpicture}[on grid]
		\node[loc] (l0) {$\ell_0\text{-}0{,}0$};
		\node[] (label) [above left=0.5 and 1.95 of l0,anchor=west] {$\sem{\mathcal{M}_\mathit{re}'}$:};
		\coordinate[left=0.3 of l0.west] (start);
		\node[dot] (dot0) [below=0.55 of l0] {};
		\node[loc] (l11) [below left=0.7 and 1.4 of dot0] {$\ell_1\text{-}2{,}1$};
		\node[loc] (l12) [below left=0.7 and 0.0 of dot0] {$\ell_1\text{-}3{,}3$};
		\node[loc] (l13) [below left=0.7 and -1.4 of dot0] {$\ell_1\text{-}4{,}5$};
		\node[loc] (l11d) [below=0.85 of l11] {$\ell_1\text{-}1{,}0$};
		\node[loc] (l12d) [below=0.85 of l12] {$\ell_1\text{-}0{,}0$};
		\node[loc] (l13d) [below=0.85 of l13] {$\ell_1\text{-}0{,}1$};
		\node[] (l11d1) [below=0.75 of l11d] {$\ldots$};
		\node[] (l11d2) [below left=0.75 and 0.5 of l12d] {$\ldots$};
		\node[] (l11d3) [below right=0.75 and 0.5 of l12d] {$\ldots$};
		\node[] (l11d4) [below=0.75 of l13d] {$\ldots$};

		\node[yshift=1.75mm] (dot0arcL) [right=1.25 of dot0,inner sep=0,align=left] {$\textsc{Uni}(2,4){\times}$\\$\textsc{Uni}(1,5)$};

		\node[] (l11dots) [below left=0.7 and 0.7 of dot0] {$\ldots$};
		\node[] (l12dots) [below left=0.7 and -0.7 of dot0] {$\ldots$};
		\node[] (l11tdots1) [below=0.425 of l11dots] {$\ldots$};
		\node[] (l12tdots1) [below=0.425 of l12dots] {$\ldots$};
		\node[] (l11ddots1) [below=0.85 of l11dots] {$\ldots$};
		\node[] (l12ddots1) [below=0.85 of l12dots] {$\ldots$};
		\node[] (l11ddots2) [below left=0.425 and 0.25 of l11ddots1] {$\ldots$};
		\node[] (l12ddots2) [below right=0.425 and 0.25 of l12ddots1] {$\ldots$};
		\node[] (l11tdots2) [below=0.325 of l11ddots2] {$\ldots$};
		\node[] (l12tdots2) [below=0.325 of l12ddots2] {$\ldots$};

		\path[]
		(l0) edge node[pos=0.6,right] {\texttt{a}} (dot0)
		;

		\path[->]
		(start) edge[] (l0)
		(dot0) edge[bend right=20] node[pos=0.3] (dot0arc0) {} (l11)
		(dot0) edge[] (l12)
		(dot0) edge[bend left=20] node[pos=0.3] (dot0arc1) {} (l13)
		(l11) edge[] node[pos=0.45,right] {1} (l11d)
		(l12) edge[] node[pos=0.45,right] {3} (l12d)
		(l13) edge[] node[pos=0.45,right] {4} (l13d)
		(l11d) edge[] node[pos=0.45,left] {\texttt{b}} (l11d1)
		(l12d) edge[] node[pos=0.45,left] {\texttt{b}} (l11d2)
		(l12d) edge[] node[pos=0.45,right] {\texttt{c}} (l11d3)
		(l13d) edge[] node[pos=0.45,right] {\texttt{c}} (l11d4)
		;

		\path[densely dotted]
		(dot0arc0.center) edge[bend right=50] (dot0arc1.center)
		(dot0arc1.center) edge[out=50,in=180] (dot0arcL.west)
		;
	\end{tikzpicture}
	\end{minipage}%
	\begin{minipage}[b]{0.33\textwidth}
	\centering%
	\begin{tikzpicture}[on grid]
		\node[loc] (l0) {$\ell_0\text{-}0{,}0$};
		\node[] (label) [above left=0.2 and 2.05 of l0,anchor=west] {$\da(\sem{\mathcal{M}_\mathit{re}'})$:};
		\coordinate[left=0.3 of l0.west] (start);
		\node[dot] (dot0) [below=0.55 of l0] {};
		\node[odot] (dot01) [below left=0.35 and 0.7 of dot0] {};
		\node[odot] (dot02) [below right=0.35 and 0.7 of dot0] {};
		\node[loc] (l11) [below left=1 and 1.4 of dot0] {$\ell_1\text{-}2{,}1$};
		\node[loc] (l12) [below left=1 and 0.0 of dot0] {$\ell_1\text{-}3{,}3$};
		\node[loc] (l13) [below left=1 and -1.4 of dot0] {$\ell_1\text{-}4{,}5$};
		\node[loc] (l11d) [below=0.85 of l11] {$\ell_1\text{-}1{,}0$};
		\node[loc] (l12d) [below=0.85 of l12] {$\ell_1\text{-}0{,}0$};
		\node[loc] (l13d) [below=0.85 of l13] {$\ell_1\text{-}0{,}1$};
		\node[] (l11d1) [below=0.75 of l11d] {$\ldots$};
		\node[] (l11d2) [below left=0.75 and 0.5 of l12d] {$\ldots$};
		\node[] (l11d3) [below right=0.75 and 0.5 of l12d] {$\ldots$};
		\node[] (l11d4) [below=0.75 of l13d] {$\ldots$};

		\node[yshift=1mm] (dot01arcL) [left=0.875 of dot01,inner sep=0,align=right] {$[2{,}4]$\\${\times}[1{,}3]$};
		\node[yshift=1mm] (dot02arcL) [right=0.875 of dot02,inner sep=0,align=left] {$[2{,}4]{\times}$\\$[3{,}5]$};

		\node[] (l11dots) [below=0.65 of dot01] {$\ldots$};
		\node[] (l12dots) [below=0.65 of dot02] {$\ldots$};
		\node[] (l11tdots1) [below=0.425 of l11dots] {$\ldots$};
		\node[] (l12tdots1) [below=0.425 of l12dots] {$\ldots$};
		\node[] (l11ddots1) [below=0.85 of l11dots] {$\ldots$};
		\node[] (l12ddots1) [below=0.85 of l12dots] {$\ldots$};
		\node[] (l11ddots2) [below left=0.425 and 0.25 of l11ddots1] {$\ldots$};
		\node[] (l12ddots2) [below right=0.425 and 0.25 of l12ddots1] {$\ldots$};
		\node[] (l11tdots2) [below=0.325 of l11ddots2] {$\ldots$};
		\node[] (l12tdots2) [below=0.325 of l12ddots2] {$\ldots$};

		\path[]
		(l0) edge node[pos=0.6,right] {\texttt{a}} (dot0)
		(dot0) edge[bend right=20] node[pos=0.15,left] {$0.5~~$} (dot01)
		(dot0) edge[bend left=20] node[pos=0.15,right] {$~~0.5$} (dot02)
		;

		\path[->]
		(start) edge[] (l0)
		(dot01) edge[bend right=20] node[pos=0.5] (dot01arc0) {} (l11)
		(dot01) edge[bend left=20] node[pos=0.5] (dot01arc1) {} (l12)
		(dot02) edge[bend right=20] node[pos=0.5] (dot02arc0) {} (l12)
		(dot02) edge[bend left=20] node[pos=0.5] (dot02arc1) {} (l13)
		(l11) edge[] node[pos=0.45,right] {1} (l11d)
		(l12) edge[] node[pos=0.45,right] {3} (l12d)
		(l13) edge[] node[pos=0.45,right] {4} (l13d)
		(l11d) edge[] node[pos=0.45,left] {\texttt{b}} (l11d1)
		(l12d) edge[] node[pos=0.45,left] {\texttt{b}} (l11d2)
		(l12d) edge[] node[pos=0.45,right] {\texttt{c}} (l11d3)
		(l13d) edge[] node[pos=0.45,right] {\texttt{c}} (l11d4)
		;

		\path[densely dotted]
		(dot01arc0.center) edge[bend right=50] (dot01arc1.center)
		(dot01arc0.center) edge[out=135,in=0] (dot01arcL.east)
		(dot02arc0.center) edge[bend right=50] (dot02arc1.center)
		(dot02arc1.center) edge[out=45,in=180] (dot02arcL.west)
		;
	\end{tikzpicture}
	\end{minipage}%
	\begin{minipage}[b]{0.35\textwidth}
	\raggedleft%
	\begin{tikzpicture}[on grid]
		\node[loc] (l0) {$\ell_0\text{-}\expired {\times} \expired$};
		\node[] (label) [above left=0.3 and 2.15 of l0,anchor=west] {$\da(\mathcal{M}_\mathit{re}')$:};
		\coordinate[left=0.3 of l0.west] (start);
		\node[dot] (dot0) [below=0.55 of l0] {};
		\node[loc] (l11) [below left=0.5 and 1.1 of dot0] {$\ell_0\text{-}[2{,}4]{\times}[1{,}3]$};
		\node[loc] (l12) [below right=0.5 and 1.1 of dot0] {$\ell_0\text{-}[2{,}4]{\times}[3{,}5]$};
		\node[loc] (l112) [below right=0.8 and 0.3 of l11.south east,anchor=north east] {$\ell_0\text{-}\expired{\times}[0{,}1]$};
		\node[loc] (l111) [below=0.4 of l11.south west,anchor=north west] {$\ell_0\text{-}[0{,}3]{\times}\expired$};
		\node[loc] (l122) [below=0.8 of l12.south east,anchor=north east] {$\ell_0\text{-}\expired{\times}[0{,}3]$};
		\node[loc] (l121) [below left=0.4 and 0.3 of l12.south west,anchor=north west] {$\ell_0\text{-}[0{,}1]{\times}\expired$};

		\node[] (l111dots) [below left=2.0 and 0.5 of l11] {$\ldots$};
		\node[] (l112dots) [below right=2.0 and 0.5 of l11] {$\ldots$};
		\node[] (l121dots) [below left=2.0 and 0.5 of l12] {$\ldots$};
		\node[] (l122dots) [below right=2.0 and 0.5 of l12] {$\ldots$};

		\path[]
		(l0) edge node[pos=0.6,right] {\texttt{a}} (dot0)
		;

		\path[->]
		(start) edge[] (l0)
		(dot0) edge[bend right=15] node[pos=0.15,left,inner sep=3mm] {$0.5~~$} (l11)
		(dot0) edge[bend left=15] node[pos=0.15,right,inner sep=3mm] {$~~0.5$} (l12)
		(l11) edge[bend right=15] node[left] {$\sigma$} (l111)
		(l11) edge[bend left=25,transform canvas={xshift=2.75mm}] node[left,pos=0.26] {$\sigma$} (l112)
		(l12) edge[bend right=15] node[left] {$\sigma$} (l121)
		(l12) edge[bend left=15,transform canvas={xshift=2.5mm}] node[left,pos=0.225] {$\sigma$} (l122)
		(l111) edge[bend right=30] node[pos=0.725,left] {\texttt{b}} (l111dots)
		(l112) edge[bend right=5] node[pos=0.45,left] {\texttt{c}} (l112dots)
		(l121) edge[bend right=35] node[pos=0.725,left] {\texttt{b}} (l121dots)
		(l122) edge[bend right=8] node[pos=0.45,left] {\texttt{c}} (l122dots)
		;
	\end{tikzpicture}
	\end{minipage}
	\caption{Transformations of the semantics of the SA $\mathcal{M}_\mathit{re}'$.}
	\label{fig:ProofIllustrations}
\end{figure}

\begin{proof}[sketch]
Our proof is inspired by that of \cite[Thm.~4.22]{Hah13}.
We illustrate the idea with a variant $\mathcal{M}_\mathit{re}'$ of $\mathcal{M}_\mathit{re}$ from \Cref{fig:sa_running_example} with CPDs slightly simplified to $x\colon \textsc{Uni}(2, 4)$ and $y\colon \textsc{Uni}(1, 3)$, and abstraction intervals $\Part_\mathit{re}'(x) = \set{ [2,4] }$ and $\Part_\mathit{re}'(y) = \set{ [1,3], [3, 5] }$.
We show an excerpt of $\sem{\mathcal{M}_\mathit{re}'}$ on the left of \Cref{fig:ProofIllustrations}.

The interval abstraction directly maps an SA $\mathcal{M}$ to the MDP $\da(\mathcal{M})$. %
We introduce an intermediate step:
an interval abstraction $\da(\sem{\mathcal{M}})$ of $\sem{\mathcal{M}}$ over the same $\Part$. %
It is obtained from $\sem{\mathcal{M}}$ by replacing each continuous distribution $\mu_{x_i}$ in rule $\mathrm{S1}$ of \Cref{def:SASemantics} by a discrete distribution $\mu_{x_i}^\da$ over the intervals in $\Part(x_i)$ followed by a continuous nondeterministic selection of a concrete value from the chosen interval such that $\mu_{x_i}^\da(I) = \mu_{x_i}(I)$.
We show $\da(\sem{\mathcal{M}}_\mathit{re}')$ in the middle of \Cref{fig:ProofIllustrations}.
Where a transition in $\sem{\mathcal{M}}$ leads into a continuous distribution over states, in $\da(\sem{\mathcal{M}})$, it now leads into a discrete distribution (solid dot \raisebox{0.67pt}{$\tikz{\node[dot]{}}$}) over sets of states out of which one is chosen nondeterministically (outlined dots \raisebox{0.67pt}{$\tikz{\node[odot]{}}$}).

The (reachable) states of $\sem{\mathcal{M}}$ and $\da(\sem{\mathcal{M}})$ are the same, as is the transition graph structure.
Given a strategy $\sched$ for $\sem{\mathcal{M}}$, we can construct a randomised strategy $\sched_{\sem{\da}}$ for $\da(\sem{\mathcal{M}})$ that induces the exact same probability measure:
First, fix $\sched(s) = \sched_{\sem{\da}}(s)$ for all states~$s$.
Then, $\sched_{\sem{\da}}$ resolves all the new continuous nondeterministic choices of values from intervals randomly so that $\sched_{\sem{\da}} \circ \mu_{x_i}^\da  = \mu_{x_i}$ everywhere.
This is possible because $\mu_{x_i}^\da$ and $\mu_{x_i}$ assign the same probabilities to all these intervals.
$\da(\sem{\mathcal{M}})$ may have additional strategies not corresponding to any in $\sem{\mathcal{M}}$, and we thus have
$p_\mathit{min}^{\da(\sem{\mathcal{M}})} \leq p_\mathit{min}^\mathcal{M}$
and
$p_\mathit{max}^\mathcal{M} \leq p_\mathit{max}^{\da(\sem{\mathcal{M}})}$.

Observe that, by rule $\mathrm{S2}$ of \Cref{def:SASemantics}, $\da(\sem{\mathcal{M}})$ contains no nondeterminism over delays, delay transitions are deterministic, and only connect states that are in the same SA location.
We can thus collapse all delay transitions, \ie delete them and copy the outgoing transitions of their successor states onto their source states, without affecting the trace distributions modulo delay steps of $\da(\sem{\mathcal{M}})$.
Denote the result as $\da(\sem{\mathcal{M}})'$.
Now, $\da(\mathcal{M})$ weakly probabilistically simulates $\da(\sem{\mathcal{M}})'$ with the $\delta$ transitions being unobservable steps (that subsume the deterministic delays of $\da(\sem{\mathcal{M}})$ but also pull forward the subsequent nondeterministic choice of action-labelled transition).
This gives us
$$
p_\mathit{min}^{\da(\mathcal{M})} \leq p_\mathit{min}^{\da(\sem{\mathcal{M}})'} = p_\mathit{min}^{\da(\sem{\mathcal{M}})}
\qquad\text{and}\qquad
p_\mathit{max}^{\da(\sem{\mathcal{M}})} = p_\mathit{max}^{\da(\sem{\mathcal{M}})'} \leq p_\mathit{max}^{\da(\mathcal{M})}.
\eqno\qed
$$
\end{proof}

\subsection{Refinement}

As we make the abstraction intervals smaller, the strategy in $\da(\sem{\mathcal{M}})$ gets fewer new choices, and the overlap between intervals in $\da(M)$ reduces.
We thus expect that, the smaller the intervals are, the more closely we approximate true optimal probabilities.
This would give rise to a refinement-based approach where we start with coarse intervals, and repeatedly split them until \eg the max.\ probability to reach an unsafe state can be shown to be below a specified safety threshold.

However, although we conjecture that the abstraction error goes to zero as the width of the intervals goes to zero, it is not yet clear how to prove this.
In particular, the effect of reducing interval width on the error is not monotonic:
Consider another variant of $\mathcal{M}_\mathit{re}$ where $x\colon \textsc{Uni}(3, 4)$ and $y\colon \textsc{Uni}(0, 4)$, the $\tau$-labelled edge is removed, and we are interested in $\Pmax{\diamond\,\set{\ell_3}}$ (\ie $x$ wins the race).
We have $\Pmax{\diamond\,\set{\ell_3}} = \frac{1}{8}$.
With timer-domain partition $\Part_1$ given by $\Part_1(x) = \set{[3,4]}$ and $\Part_1(y) = \set{[0,2], [2,4]}$, we get max.\ probability $\frac{1}{2}$:
Iff $[2,4]$ is sampled for $y$, which happens with probability $\frac{1}{2}$, we have a choice to let $x$ expire first, because it is the only interval for $y$ that overlaps with the one interval $[3,4]$ for $x$.
With $\Part_2$ given by $\Part_2(x) = \set{[3,4]}$ and $\Part_2(y) = \set{[0,1], [1, 3], [3,4]}$, which has smaller-or-equal intervals than $\Part_1$, we however get \emph{higher} (\ie worse) max.\ probability $\frac{3}{4}$ as $x$ can be chosen to win when either $[1, 3]$ or $[3,4]$ is sampled.

We suspect that monotonicity can be achieved with the right notion of ``refinement'', such as never removing interval bounds as in our example, which could open the way for a convergence proof.
Generating intervals based on per-interval probabilities, however, does not guarantee such a kind of refinement.

\section{Modelling and Model Exchange}
\label{sec:Languages}

Specifying SA mathematically as in \Cref{def:sa} or via drawings as in \Cref{fig:sa_running_example} is not convenient to build large SA models of real-life systems:
We would want to use parallel composition, and have the ability to use discrete (\eg Boolean and integer) variables to express data-dependent control flow.
The only modelling language for SA providing these facilities so far was IOSA as used by \tool{Fig}, which is an extension of the \tool{Prism} language~\cite{KNP11}.
It is specific to the input-output and weakly deterministic nature of the SA that \tool{Fig} requires.

\begin{wrapstuff}[l,type=figure,width=5.25cm,top=7]
\vspace{-0.5em}
\begin{lstlisting}[language=Modest,escapechar=`,morekeywords={},numbers=none,xleftmargin=0pt]
option "sa";
action a, b, c;
timer x, y;

do {
   a {= x = Uni(2, 4),
        y = Uni(0, 5) =};
   alt {
   :: when(y <= 0) b; tau
   :: when(x <= 0) c; stop
   }
}
\end{lstlisting}
\vspace{-1.25em}
\caption{\modest model for $\mathcal{M}_\mathit{re}$.}
\label{fig:RunningExampleModest}
\end{wrapstuff}

As part of our work on implementing model checking for general SA via interval abstraction, we also extended the \modest language to support SA directly.
\modest is a high-level modelling language inspired by process algebra but with a syntax akin to common programming languages; it provides standard control flow constructs as well as exception handling, process declarations and calls to reuse code, and user-defined datatypes and functions.
We mainly added a new \lstinline{timer} type, and an \lstinline{"sa"} option to request the maximal progress semantics of SA.
Guard sets and resets can then be expressed using the standard \modest guard expressions and assignment constructs.
This makes modelling SA in \modest slightly more flexible than our \Cref{def:sa} as different distributions can be used for the same timer in different assignments; this can be mapped back to \Cref{def:sa} by introducing additional timers if necessary.
In \Cref{fig:RunningExampleModest}, we show one way to express SA $\mathcal{M}_\mathit{re}$ of \Cref{ex:RunningSA} in \modest.

Implementing a new tool for the analysis of SA would now require implementing a parser for either IOSA (if its restrictions are acceptable) or \modest, which is a large and tedious task in either case.
To reduce such burden on tool developers, the JSON-based \jani format was designed as a model interchange format between tools.
We also extended \jani with support for timers and a new \texttt{sa} model type, allowing \modest SA models to be translated to \jani, which could then be read by other future SA analysis tools.

\section{Experiments}
\label{sec:Experiments}

To evaluate the scalability and performance of our new SA model checking approach, we created a prototype implementation and applied it to compute maximal and minimal reachability probabilities for three sets of example SA:
Our running example and the small SA that distinguish strategy classes in~\cite{DGHS18},
a set of (Markovian and non-Markovian) queueing models, and variants of
the file server model from~\cite{HHH14}.
These examples cover all common features of SA that we support:
discrete nondeterministic choices, discrete probability distributions over target locations, immediate and timer-guarded edges, including guards with multiple timers and races between edges, and timers following different CPDs.

\subsection{Implementation}
\label{sec:Implementation}

We implemented a prototype tool for our SA analysis approach in Rust.
We chose Rust as a future-proof language for a new tool that delivers both high performance and a reasonable safety level, but note that our prototype is not yet highly optimised.
It takes as input a stochastic automaton specified in a simple textual file format and 
for each timer~$x$ the per-interval probability $p_I^x$ that determines how its distribution's support is abstracted into intervals.
We extended the \toolset with a conversion from \modest SA models into this format.

Using the given $p_I^x$ for timer~$x$ with cumulative distribution function (CDF) $F_x$, %
the tool computes the intervals for each distribution by partitioning the support into $ 1/p_I^x $ intervals of equal probability mass $p_I^x$, with boundaries at the quantiles $F_x^{-1}(k \cdot p_I^x)$ for $k = 1, 2, \dots$. If $1/p_I^x$ is not an integer, the last interval covers the remainder of the distribution's support extending to the upper end of the support, which may be $\infty$ for distributions such as exponential or Weibull, and might have a probability smaller than $p_I^x$.
The tool currently supports timers adhering to uniform, exponential, Erlang, and Weibull distributions.
Adding support for a new distribution is straightforward if there exists a way to compute its inverse cumulative distribution function $F_x^{-1}$, which is necessary to find intervals of probability mass $p_I^x$ that partition the distribution's support.
For the Erlang distribution, we currently use the \texttt{ruststat} library's implementation of the inverse CDF for the gamma distribution.

In addition to reachability probabilities, the tool also supports computing the probabilities of paths satisfying a $\neg\mathit{avoid} \mathbin{\mathsf{U}} \mathcal{G}$ until formula. %
After performing interval abstraction as described in \Cref{sec:OurIntervalAbstraction}, the tool uses standard VI on the resulting MDP to compute the minimum and maximum reachability probabilities.
It uses an absolute difference of $\epsilon = 10^{-6}$ stopping criterion for VI.
We plan to replace VI by the sound and floating point-safe version of interval iteration implemented in \tool{mcsta}~\cite{Har22} as the tool matures.
In addition to returning the computed probabilities, the tool can also export the interval abstraction MDP in Graphviz\ \texttt{.dot} format.

\subsection{Experimental Setup}
\label{sec:ExperimentalSetup}

Our experiments were performed on an AMD Ryzen 7 PRO 7730U (2.0-4.5\,GHz) laptop with 32\,GB of RAM running openSuse Leap 16 Linux.
The tool was built with Rust's default release build profile.
We measured the total tool runtime with the Linux \texttt{time} command; the tool additionally outputs the time it took for parsing, computing the intervals via CDF quantiles, constructing the interval abstraction MDP from the SA, and running VI for both minimum and maximum probability.
The first two steps always took negligible time in our experiments.
In our tables, we round all probabilities to three decimal places.

\subsection{Experimental Results}
\label{sec:ExperimentalResults}

\subsubsection{Small SA.}

We first ran the tool on $\mathcal{M}_\mathit{re}$ from \Cref{ex:RunningSA} and SA $M_1$, $M_2$, and $M_5$ from~\cite{DGHS18} (shown in \Cref{app:FossacsAutomata}).
Of the SA from~\cite{DGHS18}, we chose the three where the optimal reachability probabilities of interest are nontrivial, \ie not $0$ and $1$ for minimum and maximum, respectively, for the strategy class that our approach effectively uses (memoryless prophetic with full information).

\begin{table}[t]
	\centering
\caption{Results for the running example and the SA from~\cite{DGHS18}.}
\label{tab:results_small_automata}
		\scriptsize
	  \setlength{\tabcolsep}{6pt}
		\renewcommand*{\arraystretch}{1.1}
		\begin{tabular}{l*{4}{cc}}
			\toprule
			Model & \multicolumn{2}{c}{$\mathcal{M}_\mathit{re}$} & \multicolumn{2}{c}{$M_1$} & \multicolumn{2}{c}{$M_2$} & \multicolumn{2}{c}{$M_5$} \\
			\cmidrule(lr){2-3}
			\cmidrule(lr){4-5}
			\cmidrule(lr){6-7}
			\cmidrule(lr){8-9}
			Property   & \pmin   & \pmax   & \pmin   & \pmax   & \pmin   & \pmax  & \pmin   & \pmax  \\
			\midrule
			True probability  &  1\phantom{.000} & 1\phantom{.000} & 0.25\phantom{0} & 0.75\phantom{0} & 0.03125 & 0.96875 & 0.125  & 0.875 \\
			\midrule
			  Result  & 1.000  & 1.000  &  0.245 & 0.755 & 0.004\phantom{00}  & 0.996\phantom{00}  & 0.117  & 0.883  \\
			$p_I$  & \multicolumn{2}{c}{0.05}  & \multicolumn{2}{c}{0.01} & \multicolumn{2}{c}{0.05} & \multicolumn{2}{c}{0.0125}  \\
			MDP states  & \multicolumn{2}{c}{2378} & \multicolumn{2}{c}{82657} & \multicolumn{2}{c}{467159} & \multicolumn{2}{c}{260395} \\
			Total tool runtime  & \multicolumn{2}{c}{0.2\,s} & \multicolumn{2}{c}{1.5\,s} & \multicolumn{2}{c}{4.2\,s} & \multicolumn{2}{c}{20.6\,s}  \\

			\bottomrule
		\end{tabular}
\end{table}

The results are shown in \Cref{tab:results_small_automata}.
For $\mathcal{M}_\mathit{re}$, we use $\mathcal{G} = \set{\ell_3}$.
It is easy to see that the minimum and maximum probabilities to reach $\ell_3$ are both $1$ (as there is a positive probability to go to $\ell_3$ in each loop iteration, and we consider unbounded reachability), and
our tool correctly approximates them (up to a small VI error). %
\begin{table}[t]
	\centering
	\caption{Results for $M_1$ from~\cite{DGHS18} with different per-interval probabilities.}
	\label{tab:results_m1}
		\scriptsize
	  \setlength{\tabcolsep}{2.5pt}
		\renewcommand*{\arraystretch}{1.1}
		\begin{tabular}{l*{5}{cc}}
			\toprule

			$p_I$ & \multicolumn{2}{c}{0.5} & \multicolumn{2}{c}{0.1} & \multicolumn{2}{c}{0.05} & \multicolumn{2}{c}{0.01} & \multicolumn{2}{c}{0.005} \\
			\cmidrule(lr){2-3}
			\cmidrule(lr){4-5}
			\cmidrule(lr){6-7}
			\cmidrule(lr){8-9}
			\cmidrule(lr){10-11}
			Property   & \pmin   & \pmax   & \pmin   & \pmax   & \pmin   & \pmax  & \pmin   & \pmax & \pmin   & \pmax   \\
			\midrule

			Result  & 0.000  & 1.000  &  0.200 & 0.800 & 0.225  & 0.775  & 0.245  & 0.755 & 0.247 & 0.752\\
			MDP states  & \multicolumn{2}{c}{45} & \multicolumn{2}{c}{957} & \multicolumn{2}{c}{3589} & \multicolumn{2}{c}{82657} & \multicolumn{2}{c}{325565} \\
			Total tool runtime  & \multicolumn{2}{c}{0.08\,s} & \multicolumn{2}{c}{0.07\,s} & \multicolumn{2}{c}{0.09\,s} & \multicolumn{2}{c}{1.6\,s} & \multicolumn{2}{c}{13.1\,s} \\

			\bottomrule
		\end{tabular}
\end{table}
We also correctly under- and overapproximate the true probabilities for $M_1$, $M_2$, and $M_5$ (see \Cref{app:FossacsAutomata} for the analytical calculations).
These three automata are constructed in such a way that, in states of the abstraction MDP where the intervals for the model's timers overlap (\eg $b(x) = [0, 0.5]$ and $b(y) = [0, 1]$), a prophetic strategy can force the goal to be reached (or not).
Thus, the smaller we make per-interval probability, the more intervals we have at our disposal, and the lower is the probability (in the MDP) that the timers are reset to overlapping intervals, so we can more precisely approximate the actual probabilities. %
We investigate this effect for $M_1$ by varying $p_I$, with the results shown in \Cref{tab:results_m1}.
As expected, we see that, as $p_I$ decreases, $\pmax$ decreases and $\pmin$ increases.
Overall, on the models from~\cite{DGHS18}, we obtain good approximations of the actual results in short runtimes despite the overapproximation from the abstraction. %

\subsubsection{Queues.}

Queuing systems, in particular non-Markovian ones, can be modelled by SA and are interesting because every location features a race condition between two timers for arrival and service.
We consider two systems with bounded buffers so we get finite SA:
An $M/M/1/100$ queue with $\mathit{Exp}(2)$-distributed interarrival and $\mathit{Exp}(0.5)$-distributed service times
and a $\mathit{Weibull}/E_6/1/100$ queue with $\mathit{Weibull}(2, 1.5)$ interarrival and $\mathit{Erlang}(6, 2.5)$ service times.
The two SA's locations encode the number of jobs in the queue and the number of jobs completed so far.
We compute the probability of reaching a queue overflow before a given number of jobs is completed.
As there is no nondeterminism in these models, our bounds on $\pmin$ and $\pmax$ under- and overapproximate the one true probability, respectively, and we thus report our result as intervals.

\begin{table}[t]
	\centering
	\caption{Results for $M/M/1/100$ and $\mathit{Weibull}/E_6/1/100$ queues with $p_I = 0.1$.}
	\label{tab:results_queues}
	\scriptsize
	\setlength{\tabcolsep}{6pt}
	\renewcommand*{\arraystretch}{1.1}
	\begin{tabular}{l*{4}{cc}}
		\toprule
		\multicolumn{1}{c}{} & \multicolumn{4}{c}{M/M/1/100}& \multicolumn{4}{c}{Weibull/$E_6$/1/100} \\
		\cmidrule(lr){2-5}
		\cmidrule(lr){6-9}
		Job bound & \multicolumn{2}{c}{150} & \multicolumn{2}{c}{200} & \multicolumn{2}{c}{150} & \multicolumn{2}{c}{200} \\
		\midrule

		Result  & \multicolumn{2}{c}{$[0.000, 0.979]$}  &  \multicolumn{2}{c}{$[0.000, 0.263]$} & \multicolumn{2}{c}{$[0.000, 0.924]$}  & \multicolumn{2}{c}{$[0.000, 0.084]$}  \\
		SA locations  & \multicolumn{2}{c}{10202}  & \multicolumn{2}{c}{15252} & \multicolumn{2}{c}{10202} & \multicolumn{2}{c}{15252}  \\
		MDP states  & \multicolumn{2}{c}{2229270} & \multicolumn{2}{c}{3334270} & \multicolumn{2}{c}{2229270} & \multicolumn{2}{c}{3334270} \\
        Interval abstraction  & \multicolumn{2}{c}{\phantom{0}46.3\,s} & \multicolumn{2}{c}{\phantom{0}67.3\,s} & \multicolumn{2}{c}{\phantom{0}47.2\,s} & \multicolumn{2}{c}{\phantom{0}64.9\,s}  \\
        Value iteration  & \multicolumn{2}{c}{\phantom{0}44.7\,s} & \multicolumn{2}{c}{\phantom{0}70.0\,s} & \multicolumn{2}{c}{\phantom{0}46.0\,s} & \multicolumn{2}{c}{\phantom{0}66.7\,s}  \\
		Total tool runtime  & \multicolumn{2}{c}{\phantom{0}95.0\,s} & \multicolumn{2}{c}{143.8\,s} & \multicolumn{2}{c}{\phantom{0}96.8\,s} & \multicolumn{2}{c}{137.4\,s}  \\

		\bottomrule
	\end{tabular}
\end{table}

The results of applying our tool to these SA are shown in \Cref{tab:results_queues}.
Both models have a buffer size of 100 and were analysed with a bound of 150 and 200 arrivals each.
While for 150 arrivals the maximal probability for finishing all jobs without an overflow is relatively high for both models, %
it decreases rapidly for 200 arrivals. %
The table also shows the number of locations of the SA and the number of states of the interval abstraction MDP. %
What makes the MDPs rather large despite using only 10 intervals per timer is the fact that almost every location has two outgoing edges guarded by one timer---one for the next arrival and one for the service of the current job---and timers are reset on every edge.
Despite having to generate and analyse MDPs with millions of states, the runtime from the prototype tool on a standard laptop remains below three minutes for the transformation and VI combined.

\subsubsection{File server.}
\label{SubSec:FileServer}

\begin{table}[t]
	\centering
	\caption{Results for the file server models compared to \tool{mcsta}'s STA-based approach.}
	\label{tab:results_fileserver}
	\scriptsize
	\setlength{\tabcolsep}{3.5pt}
	\renewcommand*{\arraystretch}{1.2}
		\begin{tabular}{llccrrcrrc}
			\toprule
			&& \multicolumn{4}{c}{SA interval abstraction (this paper)} & \multicolumn{3}{c}{\tool{mcsta}~\cite{HHH14}} & \tool{modes} \\
			\cmidrule(lr){3-6}\cmidrule(lr){7-9}\cmidrule(lr){10-10}
			var. & $C\text{-}n$ & intervals & result & states & time & result & states & time & result \\
			\midrule
			unif. & $3\text{-}5$ & $\frac{1}{\,8\,}$/$\frac{1}{10}$/$\frac{1}{2}$ & $[0.316, 0.359]$ & 32690
 & 0.2\,s & && & \multirow{3}{*}{$0.340$} \\
			      &               & $\frac{1}{16}$/$\frac{1}{10}$/$\frac{1}{2}$ &  $[0.327, 0.350]$ & 45460
 & 0.4\,s & $[0.320, 0.362]$ & 362389 & 3.0\,s\\
			      &               & $\frac{1}{32}$/$\frac{1}{10}$/$\frac{1}{2}$ & $[0.331, 0.348]$ & 228696
 & 2.3\,s\\[2pt]
			      & $5\text{-}8$  &  $\frac{1}{\,8\,}$/$\frac{1}{10}$/$\frac{1}{2}$ & $[0.177, 0.338]$ & 158942
 & 1.3\,s & && & \multirow{3}{*}{$0.255$}\\
                   &               & $\frac{1}{16}$/$\frac{1}{10}$/$\frac{1}{2}$ & $[0.210, 0.301]$ & 229211 & 2.4\,s & $[0.197, 0.315]$ & 1209231 & 8.9\,s\\
			       &               & $\frac{1}{32}$/$\frac{1}{10}$/$\frac{1}{2}$ & $[0.224, 0.286]$ & 1252056
 & 17.4\,s\\[2pt]
                  & $5\text{-}10$  &  $\frac{1}{\,8\,}$/$\frac{1}{10}$/$\frac{1}{2}$ & $[0.307, 0.512]$ & 255844
 & 2.3\,s & && & \multirow{3}{*}{$0.411$}\\
                   &               & $\frac{1}{16}$/$\frac{1}{10}$/$\frac{1}{2}$ & $[0.351, 0.469]$ & 396175
 & 4.2\,s & $[0.336, 0.485]$ & 1633339 & 13.6\,s\\
			      &               & $\frac{1}{32}$/$\frac{1}{10}$/$\frac{1}{2}$ & $[0.370, 0.451]$ & 2314835
 & 38.0\,s\\[2pt]
			\midrule
            exp. & $3\text{-}5$  & $\frac{1}{\,8\,}$/$\frac{1}{2}$/$\frac{1}{2}$ & $[0.230, 0.373]$ & 724557 & 5.3\,s & \multirow{3}{*}{$[0.261, 0.371]$} & \multirow{3}{*}{1016802} & \multirow{3}{*}{2.8\,s} & \multirow{3}{*}{$0.311$} \\
			      &               & $\frac{1}{10}$/$\frac{1}{4}$/$\frac{1}{2}$ & $[0.244, 0.369]$ & 2130622
 & 18.1\,s \\
			      &               & $\frac{1}{16}$/$\frac{1}{4}$/$\frac{1}{2}$ & $[0.267, 0.354]$ & 19602981
 & 175.5\,s\\[2pt]
			      & $5\text{-}8$  & $\frac{1}{4}$/$\frac{1}{4}$/$\frac{1}{2}$ & $[0.088, 0.472]$ & 1624440 & 9.4\,s & \multirow{2}{*}{$[0.206, 0.339]$} & \multirow{2}{*}{3064195} & \multirow{2}{*}{8.2\,s} & \multirow{2}{*}{$0.284$}\\
                   &               & $\frac{1}{5}$/$\frac{1}{2}$/$\frac{1}{2}$ & $[0.108, 0.459]$ & 2640688 & 16.3\,s\\[2pt]
                  & $5\text{-}10$  & $\frac{1}{4}$/$\frac{1}{4}$/$\frac{1}{2}$ & $[0.139, 0.659]$ & 7612384 & 57.0\,s & \multirow{2}{*}{$[0.304, 0.508]$} & \multirow{2}{*}{4139557} & \multirow{2}{*}{12.0\,s} & \multirow{2}{*}{$0.437$}\\
                   &               & $\frac{1}{5}$/$\frac{1}{2}$/$\frac{1}{2}$ & $[0.171, 0.644]$ & 14334125 & 110.6\,s\\
		\end{tabular}
\end{table}

Finally, we check the model of a single-threaded file server with slow archival storage and queue length~$C$ introduced in~\cite{HHH14}.
The service times (timer $x_\mathit{out}$) depend on the file sizes and are assumed to be uniformly distributed over $[1,3]$.
A requested file is either on disk or in an archive with probability $0.8$ and $0.2$, respectively. 
Retrieving a file from the archive takes between 30 and 40 time units, which is continuously nondeterministic in the original model; we use a uniform distribution over $[30, 40]$ (timer $x_\mathit{a}$) instead.
As a result, there is no more nondeterminism in the model.
We additionally replace the discretely uniformly distributed initial queue length of the original model by starting with an empty queue.
We ask for the probability of processing a given number $n$ of arrivals without having a queue overflow; due to the absence of nondeterminism, our bounds on min.\ and max.\ probability again bound this one value.

Based on the original \modest STA model file from~\cite{HHH14}, we created four \modest models representing two variants, one where the interarrival times (timer $x_\mathit{in}$) are $\mathit{Exp}(\frac{1}{8})$-distributed and one where they are $\mathit{Uni}(0,16)$-distributed, each as an STA and as an SA model.
We first compute reference results using the \tool{modes} statistical model checker~\cite{BDHS20} with 10 million simulation runs for high accuracy.
We can then compare the results and performance of our approach for SA with those of \tool{mcsta}'s digital clocks-based approach for STA.
Our tool is parametrised by a per-interval probability $p_I^x$ for each timer $x$, which results in $\lceil 1/p_I^x \rceil$ intervals of possibly different widths but equal probability mass (except for the final one).
\tool{mcsta} on the other hand takes a residual probability $p_r$ and generates as many intervals of width $1$ (as it uses digital clocks) as needed for the last (possibly open) interval to have probability mass $\leq p_r$.
For the $\mathit{Uni}(0,16)$ variant, using $p_I^{x_\mathit{in}} = \frac{1}{16}$, $p_I^{x_\mathit{a}} = \frac{1}{10}$, and $p_I^{x_\mathit{out}} = \frac{1}{2}$ with our tool (abbreviated $\frac{1}{16}$/$\frac{1}{10}$/$\frac{1}{2}$) results in the same intervals as \tool{mcsta}.
For the $\mathit{Exp}(\frac{1}{8})$ variant, we cannot make the intervals match given the current tool parametrisations;
here, \tool{mcsta} generates 25 intervals for $x_\mathit{in}$ with its default $p_r = 0.05$, so we would need $p_I^{x_\mathit{in}} = \frac{1}{25}$ to generate the same \emph{number} of intervals, which would however have very different bounds and cover much more of the support of $\mathit{Exp}(\frac{1}{8})$.

Our results are shown in \Cref{tab:results_fileserver}, for different values of $C$ and $n$.
All results of our tool and \tool{mcsta} correctly bound the reference values.
In our tool, runtime was again about equally divided between interval abstraction and VI.
For the $\mathit{Uni}(0,16)$ variant, we consistently obtain more accurate results in less than $30\,\%$ of the runtime compared to \tool{mcsta}.
The MDPs we check are much smaller for the same intervals because we use big time steps while \tool{mcsta}'s digital clocks approach has time steps of duration~1.
In contrast to \tool{mcsta}, we can refine the intervals per timer.
We found that refining intervals pays off most for $x_\mathit{in}$, significantly improving accuracy over \tool{mcsta}.

Results for the $\mathit{Exp}(\frac{1}{8})$ variant show that the choice of intervals matters, and in this case, \tool{mcsta}'s width-1 intervals concentrated on lower delay values for $x_\mathit{in}$ work better:
To achieve similar accuracy, we need over $6\times$ the runtime.
Our interval abstraction MDPs are much larger despite using fewer intervals (as we do not get close to $p_I^{x_\mathit{in}} = \frac{1}{25}$ without running out memory).
We suspect that generating less regular equal-probability intervals for $\mathit{Exp}(\frac{1}{8})$ leads to some very wide intervals that cause others to significantly expand in delay steps (reducing accuracy), as well as causing more combinations of differently-sized intervals in the state space instead of going back to the same bounds often (blowing up the MDP).
We thus plan to implement \tool{mcsta}'s interval generation method in our tool as well, which will allow it to perform better as for the $\mathit{Uni}(0,16)$ variant.

\section{Conclusion and Future Work}
\label{sec:Conclusion}

We have introduced the first model checking technique specifically designed for stochastic automata (SA), based on interval abstraction.
It puts few restrictions on the features that can be used in SA models, and our experiments---though using a prototype implementation---indicate that it can offer much higher scalability than previous techniques for superclasses of~SA.

\paragraph{Future work.}
We plan to turn our prototype into a proper performance-optimised open-source tool connected to the \toolset or \jani.
We can then use the \toolset's Kepler frontend~\cite{DBCH25} to analyse dynamic fault tree models, whose semantics is given in terms of SA (that are usually rather complex, with multiple timers, frequent resets, and hundreds of locations).
To obtain not only upper/lower bounds on max./min.\ probabilities, we can use a game-based approach~\cite{HNPWZ11} to also obtain lower/upper bounds, which could be extended to automatic interval refinement akin to~\cite{KNP09}.
Finally, the presented analysis approach can be extended to deterministic delays and time-bounded properties. %

\paragraph{Data availability.}
An artifact that contains our tool and allows reproducing \Cref{sec:Experiments} is publicly available at DOI \href{https://doi.org/10.5281/zenodo.19829349}{10.5281/zenodo.19829349}~\cite{artifact}. %

\bibliographystyle{splncs04}
\bibliography{paper}

\begin{thebibliography}{10}
\providecommand{\url}[1]{\texttt{#1}}
\providecommand{\urlprefix}{URL }
\providecommand{\doi}[1]{https://doi.org/#1}

\bibitem{AP18}
Agha, G., Palmskog, K.: A survey of statistical model checking. {ACM} Trans.
  Model. Comput. Simul.  \textbf{28}(1),  6:1--6:39 (2018).
  \doi{10.1145/3158668}

\bibitem{BAFK18}
Baier, C., de~Alfaro, L., Forejt, V., Kwiatkowska, M.: Model checking
  probabilistic systems. In: Clarke, E.M., Henzinger, T.A., Veith, H., Bloem,
  R. (eds.) Handbook of Model Checking, pp. 963--999. Springer (2018).
  \doi{10.1007/978-3-319-10575-8_28}

\bibitem{BHHK03}
Baier, C., Haverkort, B.R., Hermanns, H., Katoen, J.: Model-checking algorithms
  for continuous-time {M}arkov chains. {IEEE} Trans. Software Eng.
  \textbf{29}(6),  524--541 (2003). \doi{10.1109/TSE.2003.1205180}

\bibitem{BHHK10}
Baier, C., Haverkort, B.R., Hermanns, H., Katoen, J.P.: Performance evaluation
  and model checking join forces. Commun. {ACM}  \textbf{53}(9),  76--85
  (2010). \doi{10.1145/1810891.1810912}

\bibitem{Bel57}
Bellman, R.: A {M}arkovian decision process. Journal of Mathematics and
  Mechanics  \textbf{6}(5),  679--684 (1957)

\bibitem{BDHK06}
Bohnenkamp, H.C., D'Argenio, P.R., Hermanns, H., Katoen, J.P.: {MoDeST}: A
  compositional modeling formalism for hard and softly timed systems. {IEEE}
  Trans. Software Eng.  \textbf{32}(10),  812--830 (2006).
  \doi{10.1109/TSE.2006.104}

\bibitem{BBD03}
Bryans, J.W., Bowman, H., Derrick, J.: Model checking stochastic automata.
  {ACM} Trans. Comput. Log.  \textbf{4}(4),  452--492 (2003).
  \doi{10.1145/937555.937558}

\bibitem{BKS14}
Buchholz, P., Kriege, J., Scheftelowitsch, D.: Model checking stochastic
  automata for dependability and performance measures. In: 44th Annual
  {IEEE/IFIP} International Conference on Dependable Systems and Networks
  ({DSN} 2014). pp. 503--514. {IEEE} Computer Society (2014).
  \doi{10.1109/DSN.2014.53}

\bibitem{Bud22}
Budde, C.E.: {FIG}: the finite improbability generator v1.3. {SIGMETRICS}
  Perform. Evaluation Rev.  \textbf{49}(4),  59--64 (2022).
  \doi{10.1145/3543146.3543160}

\bibitem{BDHS20}
Budde, C.E., D'Argenio, P.R., Hartmanns, A., Sedwards, S.: An efficient
  statistical model checker for nondeterminism and rare events. Int. J. Softw.
  Tools Technol. Transf.  \textbf{22}(6),  759--780 (2020).
  \doi{10.1007/S10009-020-00563-2}

\bibitem{BDHHJT17}
Budde, C.E., Dehnert, C., Hahn, E.M., Hartmanns, A., Junges, S., Turrini, A.:
  {JANI}: Quantitative model and tool interaction. In: Legay, A., Margaria, T.
  (eds.) 23rd International Conference on Tools and Algorithms for the
  Construction and Analysis of Systems ({TACAS} 2017). Lecture Notes in
  Computer Science, vol. 10206, pp. 151--168. Springer (2017).
  \doi{10.1007/978-3-662-54580-5\_9}

\bibitem{Dar99}
D'Argenio, P.R.: Algebras and Automata for Timed and Stochastic Systems. Ph.D.
  thesis, University of Twente, Netherlands (1999),
  \url{https://research.utwente.nl/en/publications/algebras-and-automata-for-timed-and-stochastic-systems}

\bibitem{DGHS18}
D'Argenio, P.R., Gerhold, M., Hartmanns, A., Sedwards, S.: A hierarchy of
  scheduler classes for stochastic automata. In: Baier, C., Lago, U.D. (eds.)
  21st International Conference on Foundations of Software Science and
  Computation Structures ({FOSSACS} 2018). Lecture Notes in Computer Science,
  vol. 10803, pp. 384--402. Springer (2018).
  \doi{10.1007/978-3-319-89366-2\_21}

\bibitem{DK05}
D'Argenio, P.R., Katoen, J.P.: A theory of stochastic systems part {I}:
  Stochastic automata. Inf. Comput.  \textbf{203}(1),  1--38 (2005).
  \doi{10.1016/J.IC.2005.07.001}

\bibitem{DM18}
D'Argenio, P.R., Monti, R.E.: Input/output stochastic automata with urgency:
  Confluence and weak determinism. In: Fischer, B., Uustalu, T. (eds.) 15th
  International Colloquium on Theoretical Aspects of Computing ({ICTAC} 2018).
  Lecture Notes in Computer Science, vol. 11187, pp. 132--152. Springer (2018).
  \doi{10.1007/978-3-030-02508-3\_8}

\bibitem{DSSR23}
Delicaris, J., St{\"{u}}bbe, J., Schupp, S., Remke, A.: {RealySt}: A {C++} tool
  for optimizing reachability probabilities in stochastic hybrid systems. In:
  Kalyvianaki, E., Paolieri, M. (eds.) 16th {EAI} International Conference on
  Performance Evaluation Methodologies and Tools ({VALUETOOLS} 2023). Lecture
  Notes of the Institute for Computer Sciences, Social Informatics and
  Telecommunications Engineering, vol.~539, pp. 170--182. Springer (2023).
  \doi{10.1007/978-3-031-48885-6\_11}

\bibitem{DBCH25}
Dengler, G., Budde, C.E., Carnevali, L., Hartmanns, A.: Time-sensitive
  importance splitting. In: Prabhakar, P., Vandin, A. (eds.) 2nd International
  Joint Conference on Quantitative Evaluation of Systems and Formal Modeling
  and Analysis of Timed Systems ({QEST+FORMATS} 2025). Lecture Notes in
  Computer Science, vol. 16143, pp. 21--41. Springer (2025).
  \doi{10.1007/978-3-032-05792-1\_2}

\bibitem{EHKZ13}
Eisentraut, C., Hermanns, H., Katoen, J.P., Zhang, L.: A semantics for every
  {GSPN}. In: Colom, J.M., Desel, J. (eds.) 34th International Conference on
  Application and Theory of Petri Nets and Concurrency ({P}etri {N}ets 2013).
  Lecture Notes in Computer Science, vol.~7927, pp. 90--109. Springer (2013).
  \doi{10.1007/978-3-642-38697-8\_6}

\bibitem{FKNPQ11}
Forejt, V., Kwiatkowska, M.Z., Norman, G., Parker, D., Qu, H.: Quantitative
  multi-objective verification for probabilistic systems. In: Abdulla, P.A.,
  Leino, K.R.M. (eds.) 17th International Conference on Tools and Algorithms
  for the Construction and Analysis of Systems ({TACAS} 2011). Lecture Notes in
  Computer Science, vol.~6605, pp. 112--127. Springer (2011).
  \doi{10.1007/978-3-642-19835-9\_11}

\bibitem{FHHWZ11}
Fr{\"{a}}nzle, M., Hahn, E.M., Hermanns, H., Wolovick, N., Zhang, L.:
  Measurability and safety verification for stochastic hybrid systems. In:
  Caccamo, M., Frazzoli, E., Grosu, R. (eds.) 14th {ACM} International
  Conference on Hybrid Systems: Computation and Control ({HSCC} 2011). pp.
  43--52. {ACM} (2011). \doi{10.1145/1967701.1967710}

\bibitem{Fre08}
Frehse, G.: {PHAV}er: algorithmic verification of hybrid systems past
  {H}y{T}ech. Int. J. Softw. Tools Technol. Transf.  \textbf{10}(3),  263--279
  (2008). \doi{10.1007/S10009-007-0062-X}

\bibitem{GRH14}
Ghasemieh, H., Remke, A., Haverkort, B.R.: Hybrid {P}etri nets with multiple
  stochastic transition firings. In: Haviv, M., Knottenbelt, W.J., Maggi, L.,
  Miorandi, D. (eds.) 8th International Conference on Performance Evaluation
  Methodologies and Tools ({VALUETOOLS} 2014). pp. 217--224. {ICST} (2014).
  \doi{10.4108/icst.valuetools.2014.258204}

\bibitem{Hah13}
Hahn, E.M.: Model checking stochastic hybrid systems. Ph.D. thesis, Saarland
  University (2013),
  \url{http://scidok.sulb.uni-saarland.de/volltexte/2013/5259/}

\bibitem{HHH14}
Hahn, E.M., Hartmanns, A., Hermanns, H.: Reachability and reward checking for
  stochastic timed automata. Electron. Commun. Eur. Assoc. Softw. Sci. Technol.
   \textbf{70} (2014). \doi{10.14279/TUJ.ECEASST.70.968}

\bibitem{HHHK13}
Hahn, E.M., Hartmanns, A., Hermanns, H., Katoen, J.P.: A compositional
  modelling and analysis framework for stochastic hybrid systems. Formal
  Methods Syst. Des.  \textbf{43}(2),  191--232 (2013).
  \doi{10.1007/S10703-012-0167-Z}

\bibitem{HNPWZ11}
Hahn, E.M., Norman, G., Parker, D., Wachter, B., Zhang, L.: Game-based
  abstraction and controller synthesis for probabilistic hybrid systems. In:
  8th International Conference on Quantitative Evaluation of Systems ({QEST}
  2011). pp. 69--78. {IEEE} Computer Society (2011). \doi{10.1109/QEST.2011.17}

\bibitem{Har22}
Hartmanns, A.: Correct probabilistic model checking with floating-point
  arithmetic. In: Fisman, D., Rosu, G. (eds.) 28th International Conference on
  Tools and Algorithms for the Construction and Analysis of Systems ({TACAS}
  2022). Lecture Notes in Computer Science, vol. 13244, pp. 41--59. Springer
  (2022). \doi{10.1007/978-3-030-99527-0\_3}

\bibitem{HH14}
Hartmanns, A., Hermanns, H.: The {M}odest {T}oolset: An integrated environment
  for quantitative modelling and verification. In: {\'{A}}brah{\'{a}}m, E.,
  Havelund, K. (eds.) 20th International Conference on Tools and Algorithms for
  the Construction and Analysis of Systems ({TACAS} 2014). Lecture Notes in
  Computer Science, vol.~8413, pp. 593--598. Springer (2014).
  \doi{10.1007/978-3-642-54862-8\_51}

\bibitem{HJQW23}
Hartmanns, A., Junges, S., Quatmann, T., Weininger, M.: A practitioner's guide
  to {MDP} model checking algorithms. In: Sankaranarayanan, S., Sharygina, N.
  (eds.) 29th International Conference on Tools and Algorithms for the
  Construction and Analysis of Systems ({TACAS} 2023). Lecture Notes in
  Computer Science, vol. 13993, pp. 469--488. Springer (2023).
  \doi{10.1007/978-3-031-30823-9_24}

\bibitem{HMP92}
Henzinger, T.A., Manna, Z., Pnueli, A.: What good are digital clocks? In:
  Kuich, W. (ed.) 19th International Colloquium on Automata, Languages and
  Programming ({ICALP} 1992). Lecture Notes in Computer Science, vol.~623, pp.
  545--558. Springer (1992). \doi{10.1007/3-540-55719-9\_103}

\bibitem{How60}
Howard, R.A.: Dynamic Programming and {M}arkov Processes. MIT Press (1960)

\bibitem{KNP09}
Kwiatkowska, M.Z., Norman, G., Parker, D.: Stochastic games for verification of
  probabilistic timed automata. In: Ouaknine, J., Vaandrager, F.W. (eds.) 7th
  International Conference on Formal Modeling and Analysis of Timed Systems
  ({FORMATS} 2009). Lecture Notes in Computer Science, vol.~5813, pp. 212--227.
  Springer (2009). \doi{10.1007/978-3-642-04368-0\_17}

\bibitem{KNP11}
Kwiatkowska, M.Z., Norman, G., Parker, D.: {PRISM} 4.0: Verification of
  probabilistic real-time systems. In: Gopalakrishnan, G., Qadeer, S. (eds.)
  23rd International Conference on Computer Aided Verification ({CAV} 2011).
  Lecture Notes in Computer Science, vol.~6806, pp. 585--591. Springer (2011).
  \doi{10.1007/978-3-642-22110-1\_47}

\bibitem{KNPS06}
Kwiatkowska, M.Z., Norman, G., Parker, D., Sproston, J.: Performance analysis
  of probabilistic timed automata using digital clocks. Formal Methods Syst.
  Des.  \textbf{29}(1),  33--78 (2006). \doi{10.1007/S10703-006-0005-2}

\bibitem{KNSS02}
Kwiatkowska, M.Z., Norman, G., Segala, R., Sproston, J.: Automatic verification
  of real-time systems with discrete probability distributions. Theor. Comput.
  Sci.  \textbf{282}(1),  101--150 (2002). \doi{10.1016/S0304-3975(01)00046-9}

\bibitem{LLTYSG19}
Legay, A., Lukina, A., Traonouez, L.M., Yang, J., Smolka, S.A., Grosu, R.:
  Statistical model checking. In: Steffen, B., Woeginger, G.J. (eds.) Computing
  and Software Science -- State of the Art and Perspectives, Lecture Notes in
  Computer Science, vol. 10000, pp. 478--504. Springer (2019).
  \doi{10.1007/978-3-319-91908-9_23}

\bibitem{MCB84}
Marsan, M.A., Conte, G., Balbo, G.: A class of generalized stochastic {P}etri
  nets for the performance evaluation of multiprocessor systems. {ACM} Trans.
  Comput. Syst.  \textbf{2}(2),  93--122 (1984). \doi{10.1145/190.191}

\bibitem{artifact}
Petri, A.: Effective stochastic automata model checking by interval abstraction
  (artifact). Zenodo (2026). \doi{10.5281/zenodo.19829349}

\bibitem{Spr00}
Sproston, J.: Decidable model checking of probabilistic hybrid automata. In:
  Joseph, M. (ed.) 6th International Symposium on Formal Techniques in
  Real-Time and Fault-Tolerant Systems ({FTRTFT} 2000). Lecture Notes in
  Computer Science, vol.~1926, pp. 31--45. Springer (2000).
  \doi{10.1007/3-540-45352-0\_5}

\bibitem{Wol12}
Wolovick, N.: Continuous probability and nondeterminism in labeled transition
  systems. Ph.D. thesis, Universidad Nacional de C{\'o}rdoba, C{\'o}rdoba,
  Argentina (2012)

\bibitem{ZSRHH10}
Zhang, L., She, Z., Ratschan, S., Hermanns, H., Hahn, E.M.: Safety verification
  for probabilistic hybrid systems. In: Touili, T., Cook, B., Jackson, P.B.
  (eds.) 22nd International Conference on Computer Aided Verification ({CAV}
  2010). Lecture Notes in Computer Science, vol.~6174, pp. 196--211. Springer
  (2010). \doi{10.1007/978-3-642-14295-6\_21}

\end{thebibliography}

\clearpage
\appendix

\section{Stochastic Automata from the Literature}
\label{app:FossacsAutomata}

\Cref{fig:Cex3} shows the three small SA $M_1$, $M_2$, and $M_5$ from~\cite{DGHS18} for which \Cref{tab:results_small_automata} shows results in \Cref{sec:ExperimentalResults}. In this section we show the exact computation of maximal and minimal reachability probabilities.

\begin{figure}[b]
\begin{minipage}[b]{0.33\textwidth}
\centering
\begin{tikzpicture}[on grid,auto]
  \node[state] (l0) {$\ell_0$};
  \coordinate[left=0.3 of l0.west] (start);
  \node[] (me) [above left=0.4 and 1.1 of l0] {\small$M_1$:};
  \node[] (distr) [above right=0.2 and 1.3 of l0,align=left] {$x\colon \textsc{Uni}(0, 1)$\\$y\colon \textsc{Uni}(0, 1)$};
  \node[state] (l1) [below=1 of l0] {$\ell_1$};
  \node[state] (l2) [below left=0.875 and 0.75 of l1] {$\ell_2$};
  \node[state] (l3) [below right=0.875 and 0.75 of l1] {$\ell_3$};
  \node[state] (yes) [below=1 of l2] {\cmark};
  \node[state] (no) [below=1 of l3] {\xmark};
  ;
  \path[->]
    (start) edge node {} (l0)
    (l0) edge node[right,pos=0.25,inner sep=0.5mm] {\strut$\varnothing$} node[right,pos=0.7,inner sep=0.5mm] {\strut$\text{\restart}(\{ x \})$} (l1)
    (l1) edge[] node[left,pos=0.15,inner sep=1mm] {\strut$\varnothing\!$} node[left,pos=0.55,inner sep=1mm] {\strut$\text{\restart}(\{ y \})$} (l2)
    (l1) edge[] node[right,pos=0.15,inner sep=1mm] {\strut$\varnothing$} node[right,pos=0.55,inner sep=1mm] {\strut$\text{\restart}(\{ y \})$} (l3)
    (l2) edge node[left,pos=0.25,inner sep=0.5mm] {\strut$\{ x \}$} (yes)
    (l2) edge[bend right=20] node[above,pos=0.33,inner sep=2mm] {\strut$\{ y \}$} (no)
    (l3) edge[bend left=20] node[above,pos=0.33,inner sep=2mm] {\strut$\{ y \}$} (yes)
    (l3) edge node[right,pos=0.25,inner sep=0.5mm] {\strut$\{ x \}$} (no)
  ;
\end{tikzpicture}
\caption{SA $M_1$}
\label{fig:Cex2}
\end{minipage}%
\begin{minipage}[b]{0.33\textwidth}
\centering
\begin{tikzpicture}[on grid,auto]
  \node[state] (l0) {$\ell_0$};
  \coordinate[left=0.3 of l0.west] (start);
  \node[] (me) [above left=0.4 and 1.1 of l0] {\small$M_2$:};
  \node[] (distr) [above right=0.2 and 1.3 of l0,align=left] {$x\colon \textsc{Uni}(0, 8)$\\$y\colon \textsc{Uni}(0, 1)$\\$z\colon \textsc{Uni}(0, 4)$};
  \node[state] (l1) [below=1 of l0] {$\ell_1$};
  \node[state] (l2) [below=1 of l1] {$\ell_2$};
  \node[state] (l3) [below left=0.875 and 0.75 of l2] {$\ell_3$};
  \node[state] (l4) [below right=0.875 and 0.75 of l2] {$\ell_4$};
  \node[state] (yes) [below=1 of l3] {\cmark};
  \node[state] (no) [below=1 of l4] {\xmark};
  ;
  \path[->]
    (start) edge node {} (l0)
    (l0) edge node[right,pos=0.25,inner sep=0.5mm] {\strut$\varnothing$} node[right,pos=0.7,inner sep=0.5mm] {\strut$\text{\restart}(\{ x, z \})$} (l1)
    (l1) edge[bend right=50] node[left,pos=0.25,inner sep=0.5mm] {\strut$\{ x \}$} node[left,pos=0.7,inner sep=0.5mm] {\strut$\text{\restart}(\{z \})$} (l2)
    (l1) edge[bend left=50] node[right,pos=0.25,inner sep=0.5mm] {\strut$\{z \}$} node[right,pos=0.7,inner sep=0.75mm] {\strut$\text{\restart}(\{z \})$} (l2)
    (l2) edge[] node[left,pos=0.15,inner sep=1mm] {\strut$\varnothing\!$} node[left,pos=0.55,inner sep=1mm] {\strut$\text{\restart}(\{ y \})$} (l3)
    (l2) edge[] node[right,pos=0.15,inner sep=1mm] {\strut$\varnothing$} node[right,pos=0.55,inner sep=1mm] {\strut$\text{\restart}(\{ y \})$} (l4)
    (l3) edge node[left,pos=0.25,inner sep=0.5mm] {\strut$\{ x \}$} (yes)
    (l3) edge[bend right=20] node[above,pos=0.33,inner sep=2mm] {\strut$\{ y \}$} (no)
    (l4) edge[bend left=20] node[above,pos=0.33,inner sep=2mm] {\strut$\{ y \}$} (yes)
    (l4) edge node[right,pos=0.25,inner sep=0.5mm] {\strut$\{ x \}$} (no)
  ;
\end{tikzpicture}
\caption{SA $M_2$}
\label{fig:CexX}
\end{minipage}%
\begin{minipage}[b]{0.33\textwidth}
\centering
\begin{tikzpicture}[on grid,auto]
 \node[state] (l0) {$\ell_1$};
  \coordinate[left=0.3 of l0.west] (start);
  \node[] (me) [above left=0.4 and 1.1 of l0] {\small$M_5$:};
  \node[] (distr) [above right=0.2 and 1.3 of l0,align=left] {$x\colon \textsc{Uni}(0, 1)$\\$y\colon \textsc{Uni}(0, 1)$};
  \node[state] (l1) [below=1 of l0] {$\ell_2$};
  \node[state] (l2) [below=1 of l1] {$\ell_3$};
  \node[state] (l3) [below left=0.875 and 0.75 of l2] {$\ell_4$};
  \node[state] (l4) [below right=0.875 and 0.75 of l2] {$\ell_5$};
  \node[state] (yes) [below=0.84 of l3] {\cmark};
  \node[state] (no) [below=0.84 of l4] {\xmark};
  ;
  \path[->]
    (start) edge node {} (l0)
    (l0) edge node[right,pos=0.25,inner sep=0.5mm] {\strut$\varnothing$} node[right,pos=0.7,inner sep=0.5mm] {\strut$\text{\restart}(\{ x, y \})$} (l1)
    (l1) edge node[right,pos=0.25,inner sep=0.5mm] {\strut$\{ y \}$} node[right,pos=0.7,inner sep=0.5mm] {\strut$\text{\restart}(\{ y \})$} (l2)
    (l2) edge[] node[left,pos=0.15,inner sep=1mm] {\strut$\varnothing\!$} node[left,pos=0.55,inner sep=1mm] {\strut$\text{\restart}(\{ y \})$} (l3)
    (l2) edge[] node[right,pos=0.15,inner sep=1mm] {\strut$\varnothing$} node[right,pos=0.55,inner sep=1mm] {\strut$\text{\restart}(\{ y \})$} (l4)
    (l3) edge node[left,pos=0.4,inner sep=0.5mm] {\strut$\{ x \}$} (yes)
    (l3) edge[bend right=20] node[above,pos=0.33,inner sep=2mm] {\strut$\{ y \}$} (no)
    (l4) edge[bend left=20] node[above,pos=0.33,inner sep=2mm] {\strut$\{ y \}$} (yes)
    (l4) edge node[right,pos=0.4,inner sep=0.5mm] {\strut$\{ x \}$} (no)
  ;
\end{tikzpicture}
\caption{SA $M_5$}
\label{fig:Cex3}
\end{minipage}%
\end{figure}

\subsection{Exact Probabilities for $M_1$}
The exact probabilities for reaching the goal in $M_1$ are $\pmin = 0.25$ and $\pmax = 0.75$ and can be computed as

\begin{align*}
\textstyle\pmax &= P(X\leq 0.5) \cdot P(X\leq Y) + P(X\geq 0.5) \cdot P(X\geq Y)\\
&= 0.5\cdot(\int_{0}^{0.5} \int_{x}^{1} f(y) \,dy f(x) \,dx) + 0.5\cdot(\int_{0.5}^{1} \int_{0}^{x} f(y) \,dy f(x)\,dx)\\ &= 0.5 \cdot \int_{0}^{0.5} 2(1-x) \, dx + 0.5\cdot\int_{0.5}^{1} 2x\,dx\\ &= 0.75\\
\end{align*}
and 
\begin{align*}
\textstyle\pmin &= P(X\leq 0.5) \cdot P(X\geq Y) + P(X\geq 0.5) \cdot P(X\leq Y)\\
&= 0.5\cdot(\int_{0}^{0.5} \int_{0}^{x} f(y) \,dy f(x) \,dx) + 0.5\cdot(\int_{0.5}^{1} \int_{x}^{1} f(y) \,dy f(x)\,dx)\\ &= 0.5 \cdot \int_{0}^{0.5} 2x \, dx + 0.5\cdot\int_{0.5}^{1} 2(1-x)\,dx\\ &= 0.25.
\end{align*}

The only nondeterministic choice present in $M_1$ is the choice of the outgoing transition in $\ell_1$. If the transition to $\ell_2$ is chosen $x$ must expire before $y$ to reach the target state, if the transition to $\ell_3$ is chosen $y$ must expire before $x$ instead. Since $X,Y \sim Uni(0,1)$ and while the valuation of $x$ is known in $\ell_1$, the valuation of $y$ is resampled on the transitions. Hence, for maximizing the probability of reaching the target location, going to $\ell_2$ is optimal if $\vec{x} \leq 0.5$ which is represented by the first integral, where the outer integral for $x$ ranges from 0 to 0.5 and the inner integral for $y$ ensures that $y$ expires after $x$. In the opposite case where $\vec{x} \geq 0.5$ going to $\ell_3$ is optimal, which is accounted for by the second integral.

\subsection{Exact Probabilities for $M_2$}
$M_2$ has a similar nondeterministic choice in location $\ell_2$. However, the valuation of $x$ in that location depends now also on an additional timer variable $z$. If initially the sampled valuation $x$ is lower than the sampled valuation of $z$, the transition to $\ell_2$ guarded by $x$ is taken, meaning that $x = \expired$ in $\ell_2$ and thus allowing optimal decisions. In the opposite case where initially $\vec{x} > \vec{z}$ the optimal choice depends on the remaining valuation of $x$ in location $\ell_2$. The CPD for that can be computed as $C=X-Z$.

\subsubsection{CDF of $C = X - Z$.}
Since $X \sim \mathrm{Uniform}[0,8]$ and $Y \sim \mathrm{Uniform}[0,4]$, so their PDFs are
\[
  f_X(x) = \frac{1}{8} \quad (x \in [0,8]), \qquad f_Y(y) = \frac{1}{4} \quad (y \in [0,4]).
\]
The difference $Z = X - Y$ ranges over $[-4,\, 8]$.
The combined PDF $f_Z(z)$ can be obtained using convolution
 
\[
  f_Z(z)
  = \int_{-\infty}^{\infty} f_X(t)\, f_Y(t - z)\, dt
  = \int_{-\infty}^{\infty}
      \frac{1}{8} \cdot \frac{1}{4}
      \cdot \mathbf{1}[t \in [0,8]]
      \cdot \mathbf{1}[t - z \in [0,4]]\, dt.
\]
 
The second indicator requires $t \in [z,\, z + 4]$, so the effective integration range is
$t \in [\max(0,\, z),\, \min(8,\, z+4)]$, and the integrand equals $\tfrac{1}{32}$ throughout.
This interval changes across three cases.
\begin{itemize}
    \item {Case 1: $-4 \leq z < 0$.}
Both constraints give $t \geq 0$ and $t \leq z + 4$. Since $z + 4 < 4 \leq 8$ we have
$\min(8,\, z+4) = z + 4$, so the length is $(z+4) - 0 = z + 4$:
\[
  f_Z(z) = \frac{1}{32}(z + 4).
\]
\item {Case 2: $0 \leq z \leq 4$.}
Lower bound: $\max(0,z) = z$. Upper bound: $\min(8,\, z+4) = z+4$ (since $z + 4 \leq 8$).
Length $= 4$:
\[
  f_Z(z) = \frac{4}{32} = \frac{1}{8}.
\]
\item {Case 3: $4 < z \leq 8$.}
Lower bound: $\max(0,z) = z$. Upper bound: $\min(8,\, z+4) = 8$ (since $z + 4 > 8$).
Length $= 8 - z$:
\[
  f_Z(z) = \frac{1}{32}(8 - z).
\]
\end{itemize}

Combining all three cases gives the full PDF:
\[
  \boxed{f_Z(z) = \begin{cases}
    \dfrac{z+4}{32}  & -4 \leq z < 0, \\[8pt]
    \dfrac{1}{8}     &  0 \leq z \leq 4, \\[8pt]
    \dfrac{8-z}{32}  &  4 < z \leq 8.
  \end{cases}}
\]
 
Using the combined probability distribution for $C = X -Z$ the exact maximal reachability probability can be computed as: 
\begin{align*}
\textstyle\pmax &= P(X-Z \leq 0) + P(X-Z \leq 0.5 \wedge X-Z \leq Y)\\&\phantom{=}\ + P(X-Z \geq 0.5 \wedge X-Z \geq Y)\\
&= \int_{-4}^{0} \frac{1}{32} (c + 4) \,dc + \int_{0}^{0.5} \int_{c}^{1} f_C(c) f_Y(y) \,dc \,dy \\&\phantom{=}\ + \int_{0.5}^{8} \int_{0}^{c} f_C(c) f_Y(y) \,dc \,dy\\
&= \int_{-4}^{0} \frac{1}{32} (c + 4) \,dc + \int_{0}^{0.5} \int_{c}^{1} f_C(c) f_Y(y) \,dc \,dy\\&\phantom{=}\ + \int_{0.5}^{1} \int_{0}^{c} f_C(c) f_Y(y) \,dc \,dy + \int_{1}^{8} f_C(c)\,dc \\ 
&= \int_{-4}^{0} \frac{1}{32} (c + 4) \,dc + \int_{0}^{0.5} \int_{c}^{1} \frac{1}{8} \cdot 1 \,dc \,dy\\&\phantom{=}\ + \int_{0.5}^{1} \int_{0}^{c} \frac{1}{8} \cdot 1 \,dc \,dy + \int_{1}^{4} f_C(c)\,dc + \int_{4}^{8} f_C(c)\,dc\\ 
&= \int_{-4}^{0} \frac{1}{32} (c + 4) \,dc + \int_{0}^{0.5} \frac{1}{8} (1-c) \,dc\\&\phantom{=}\ + \int_{0.5}^{1} \frac{1}{8} c \,dc + \int_{1}^{4} f_C(c)\,dc + \int_{4}^{8} f_C(c)\,dc\\ 
&= \frac{1}{32} \left[0.5c^2 + 4c \right]_{-4}^{0} + \left[\frac{1}{8}c - \frac{1}{8} \cdot \frac{1}{2} c^2 \right]_{0}^{0.5} + \left[ \frac{1}{16} c^2 \right]_{0.5}^{1}\\&\phantom{=}\ + \left[ \frac{1}{8} c \right]_{1}^{4} + \frac{1}{32}\left[ 8 c - \frac{1}{2} c^2\right]_{4}^{8}\\ 
&= \frac{1}{4} + (\frac{1}{8}(\frac{1}{2} - \frac{1}{8}) + (\frac{1}{16} - \frac{1}{16} \cdot \frac{1}{4}) + (\frac{4}{8} - \frac{1}{8}) + \frac{1}{32}\left( \frac{64}{2} - 24 \right)\\
&= \frac{1}{4} + \frac{1}{8} \cdot\frac{3}{8} + \frac{3}{64} + \frac{3}{8} + \frac{8}{32}\\
&= \frac{31}{32} = 0.96875\\
\end{align*}

and, since for $C < 0$ and $C >1$ an optimal decision which avoids the target state is possible, 

\begin{align*}
\textstyle\pmin &= P(0 \leq C \leq 0.5 \wedge Y \leq C) + P(0.5 \leq C \leq 1 \wedge Y \geq C)\\
&= \int_{0}^{0.5} \int_{0}^{c} f_C(c) \,dc f_Y(y) \,dy + \int_{0.5}^{1} \int_{c}^{1} f_C(c) \,dc f_Y(y) \,dy\\ 
&= \int_{0}^{0.5} \frac{1}{8} c \,dc + \int_{0.5}^{1} \frac{1}{8} (1-c) \,dc \\ 
&= \left[ \frac{1}{16} c^2 \right]_{0}^{0.5} + \frac{1}{8} \left[c - 0.5 c^2 \right]_{0.5}^{1} \\ 
&= \frac{1}{16} \cdot \frac{1}{4} + \frac{1}{8} (1 - \frac{1}{2} - (\frac{1}{2} - \frac{1}{8}))\\
&= \frac{1}{16} \cdot \frac{1}{4} + \frac{1}{8} (\frac{1}{8})\\
&= \frac{1}{32} = 0.03123\\.
\end{align*}

\subsection{Exact Probabilities for $M_5$}
For $M_5$ the optimal probabilities to reach or avoid the target state are determined by the nondeterministic choice in location $\ell_3$. If initially the sampled valuation for $x$ is below the sampled valuation of $y$, then x is already expired in $\ell_3$. Since both both timers are uniformly distributed on $(0,1)$ the probability that is $P(X \leq Y) = 0.5 $ and in that case $x= \expired$ means that the reset value of $y$ does not influence the result.
In the opposite case where initially $\vec{v} > \vec{y}$ the optimal choice in Location $\ell_3$ depends on the remaining valuation of $x$. The optimal decisions are similarly to $M_1$: if $\vec{x} \leq 0.5$ the optimal choice for maximal winning chances is taking the transition to Location $\ell_4$ and if $v(x) \geq 0.5$ it is the transition to $\ell_5$. The remaining valuation of $x$ can be computed as the joint probability function $Z = X-Y$ .

\subsubsection{PDF of $Z = X - Y$}
Since $X, Y \sim Uni(0,1)$ independently, $Z$ ranges over $[-1, 1]$. Using convolution, the probability density function of $Z$ is:
\[
  f_Z(z) = \int_{-\infty}^{\infty} f_X(t)\, f_Y(t - z)\, dt.
\]
Both densities equal $1$ on $[0,1]$ and $0$ elsewhere, so the integrand is $1$ only when
$t \in [0,1]$ and $t - z \in [0,1]$, i.e.\ $t \in [\max(0,z),\, \min(1, 1+z)]$.

\begin{itemize}
    \item {Case 1: $0 \leq z \leq 1$.}
\[
  f_Z(z) = \int_z^1 dt = 1 - z.
\]
\item {Case 2: $-1 \leq z < 0$.}
\[
  f_Z(z) = \int_0^{1+z} dt = 1 + z.
\]
\end{itemize}

Hence the full PDF is:
\[
  \boxed{f_Z(z) = 1 - |z|, \qquad z \in [-1,\, 1].}
\]
This is the \textit{symmetric triangular distribution} on $[-1, 1]$, with the peak at $z = 0$.
The exact probabilities for reaching the goal in $M_5$ are $\pmin = 0.125$ and $\pmax = 0.875$ and can be computed as

\begin{align*}
p_{\max} &= P(X \leq Y_0) + P(0\leq X-Y_0 \leq 0.5) + P(X-Y_0 \geq 0.5)\\
&= 0.5 + 0.5\cdot\left(\int_{0}^{0.5} \int_{z}^{1} f_z(z) \,dz\, f_y(y) \,dy\right) + 0.5\cdot\left(\int_{0.5}^{1} \int_{0}^{z} f_z(z) \,dz\, f_y(y)\,dy\right) \\
         &= 0.5 + 0.5\cdot\left(\int_{0}^{0.5} f_z(z)(1-z) \,dz\right) + 0.5\cdot\left(\int_{0.5}^{1} f_z(z)\,z \,dz\right) \\
         &= 0.5 + 0.5\cdot\left(\int_{0}^{0.5} (1-z)^2 \,dz\right) + 0.5\cdot\left(\int_{0.5}^{1} z(1-z) \,dz\right) \\
         &= 0.5 + \left[z - z^2 + \frac{1}{3}z^3\right]_{0}^{0.5} + \left[\frac{1}{2}z^2 - \frac{1}{3}z^3\right]_{0.5}^{1} \\
         &= 0.5 + \left(0.5 - 0.25 + \frac{1}{3}\cdot\frac{1}{8}\right) - 0 + \left(\frac{1}{2} - \frac{1}{3}\right) - \left(\frac{1}{8} - \frac{1}{3}\cdot\frac{1}{8}\right) \\
         &= 0.875
\end{align*}
and
\begin{align*}
p_{\min} &= 0.5\cdot\left(\int_{0}^{0.5} \int_{0}^{z} f_z(z) \,dz\, f_y(y) \,dy\right) + 0.5\cdot\left(\int_{0.5}^{1} \int_{z}^{1} f_z(z) \,dz\, f_y(y)\,dy\right) \\
         &= 0.5\cdot\left(\int_{0}^{0.5} f_z(z)z \,dz\right) + 0.5\cdot\left(\int_{0.5}^{1} f_z(z)(1-z) \,dz\right) \\
         &= 0.5\cdot\left(\int_{0}^{0.5} (1-z)z \,dz\right) + 0.5\cdot\left(\int_{0.5}^{1} (1-z)^2 \,dz\right) \\
         &=  \left[\frac{1}{2}z^2 - \frac{1}{3}z^3 \right]_{0}^{0.5} + \left[ z - z^2 + \frac{1}{3}z^3 \right]_{0.5}^{1} \\
         &= \frac{1}{8} - \frac{1}{3}\cdot\frac{1}{8} - 0 + 1 -1 + \frac{1}{3} -\left(\frac{1}{2} - \frac{1}{2} + \frac{1}{3}\cdot\frac{1}{8}\right) \\
         &= 0.125
\end{align*}

\end{document}